\title{Preserving Derivative Information while Transforming Neuronal Curves}
\author[1,2]{Thomas L. Athey\thanks{tathey1@jhu.edu}}
\author[3, 4]{Daniel J. Tward}
\author[5]{Ulrich Mueller}
\author[2,6,7,8]{Laurent Younes}
\author[1,2,7,8]{Joshua T. Vogelstein}
\author[1,2,7,8]{Michael I. Miller}
\affil[1]{Department of Biomedical Engineering, Johns Hopkins University, Baltimore, MD, USA}
\affil[2]{Institute of Computational Medicine, Johns Hopkins University, Baltimore, MD, USA}
\affil[3]{Department of Computational Medicine, University of California at Los Angeles, Los Angeles, CA, USA}
\affil[4]{Department of Neurology, University of California at Los Angeles, Los Angeles, CA, USA}
\affil[5]{Department of Neuroscience, Johns Hopkins University, Baltimore, MD, USA}
\affil[6]{Department of Applied Mathematics \& Statistics, Johns Hopkins University, Baltimore, MD, USA}
\affil[7]{Center for Imaging Science, Johns Hopkins University, Baltimore, MD, USA}
\affil[8]{Kavli Neuroscience Discovery Institute, Johns Hopkins University, Baltimore, MD, USA}
\begin{document}
\maketitle

\begin{abstract}
The international neuroscience community is building the first comprehensive atlases of brain cell types to understand how the brain functions from a higher resolution, and more integrated perspective than ever before. In order to build these atlases, subsets of neurons (e.g. serotonergic neurons, prefrontal cortical neurons etc.) are traced in individual brain samples by placing points along dendrites and axons. Then, the traces are mapped to common coordinate systems by transforming the positions of their points, which neglects how the transformation bends the line segments in between.  In this work, we apply the theory of jets to describe how to preserve derivatives of neuron traces up to any order. We provide a framework to compute possible error introduced by standard mapping methods, which involves the Jacobian of the mapping transformation. We show how our first order method improves mapping accuracy in both simulated and real neuron traces under random diffeomorphisms. Our method is freely available in our open-source Python package brainlit.
\end{abstract}

\section*{Main}

% Consider the differentiable curve in 3D space $c:[a,b] \rightarrow \mathbb{R}^3$ and the diffeomorphism $\phi : \mathbb{R}^3 \rightarrow \mathbb{R}^3$. We are interested in how $\phi$ transforms, or acts on, $c$, particularly when $c$ is stored as a discrete sampling of points, which we will henceforth call \textit{knots}.

The brain functions as a network of chemical and electrical activity, so identifying how neurons connect across brain regions is central to understanding how the brain works, and how to treat brain diseases. Modern neuroscience techniques can image single neuron morphology at scale \cite{economo2016platform}, and subsequent neuron tracing can help discover new morphological subtypes \cite{winnubst2019reconstruction}. Due to anatomical variation, and deformations that may have occurred during tissue preparation, neuron traces need to be mapped between coordinate spaces to compare morphologies from different brain samples. Brain registration software often includes neuron mapping implementations, but these implementations have not been thoroughly characterized from a numerical analysis perspective. 

This question is relevant to the ongoing work of the international neuroscience community, including the Brain Initiative Cell Census Network (BICCN), to establish comprehensive neuronal atlases of the mammalian brain \cite{brain2021multimodal}. This effort has produced many images of stained or fluorescently labeled brains, which are being used to generate digital neuron traces for morphological analysis. The traces are commonly stored as a set of connected 3D coordinates, or knots, such as in the SWC format \cite{stockley1993system, cannon1998line}. The connections between the knots are classically represented as cylinders \cite{cannon1998line}, or conical frustums \cite{o2020module}, but here we ignore radius information, since it is not generated by all neuron tracing methods. Consequently, the whole neuron trace is considered to be a tree of piecewise linear curves.

In order to assemble these traces into a complete picture of the various neuron morphologies in the brain, scientists need a way to map neuron traces into common coordinate systems. Several popular software applications exist for this task and are used to assemble atlases of neuron morphology. For example, Peng et al. \cite{peng2021morphological} used mBrainAligner \cite{qu2022cross}, Gao et al. \cite{gao2022single} used the Computational Morphometry Toolkit, and the MouseLight project \cite{winnubst2019reconstruction} used displacement fields from Fedorov et al. \cite{fedorov20123d}. Existing methods use what we call \textit{zeroth order} curve mapping in that they only map the positions of the knots. However, depending on the nonlinearity of the mapping, and the continuous representation of the neuron trace, zeroth order mapping is sensitive to different samplings of the original neuronal curve (Fig. \ref{fig:summary}a,b). In other words, sampling the same curve different ways while tracing in the original image may lead to different mapped morphologies. It is critical that neuron mapping methods preserve the geometry of digital neuron traces in order to build reliable atlases of neuron morphology, and to accurately identify deviations in diseased brains.

In this work, we introduce a method to preserve derivative information when mapping neuronal curves, and investigate the conditions under which this technique is advantageous to existing methods (Figure \ref{fig:summary}). We applied our method to both simulated data and real neuron traces from a whole mouse brain image, and the code used developed in this work is freely available in our Python package brainlit.

\begin{figure}[ht]
\centering
\includegraphics[width=\textwidth]{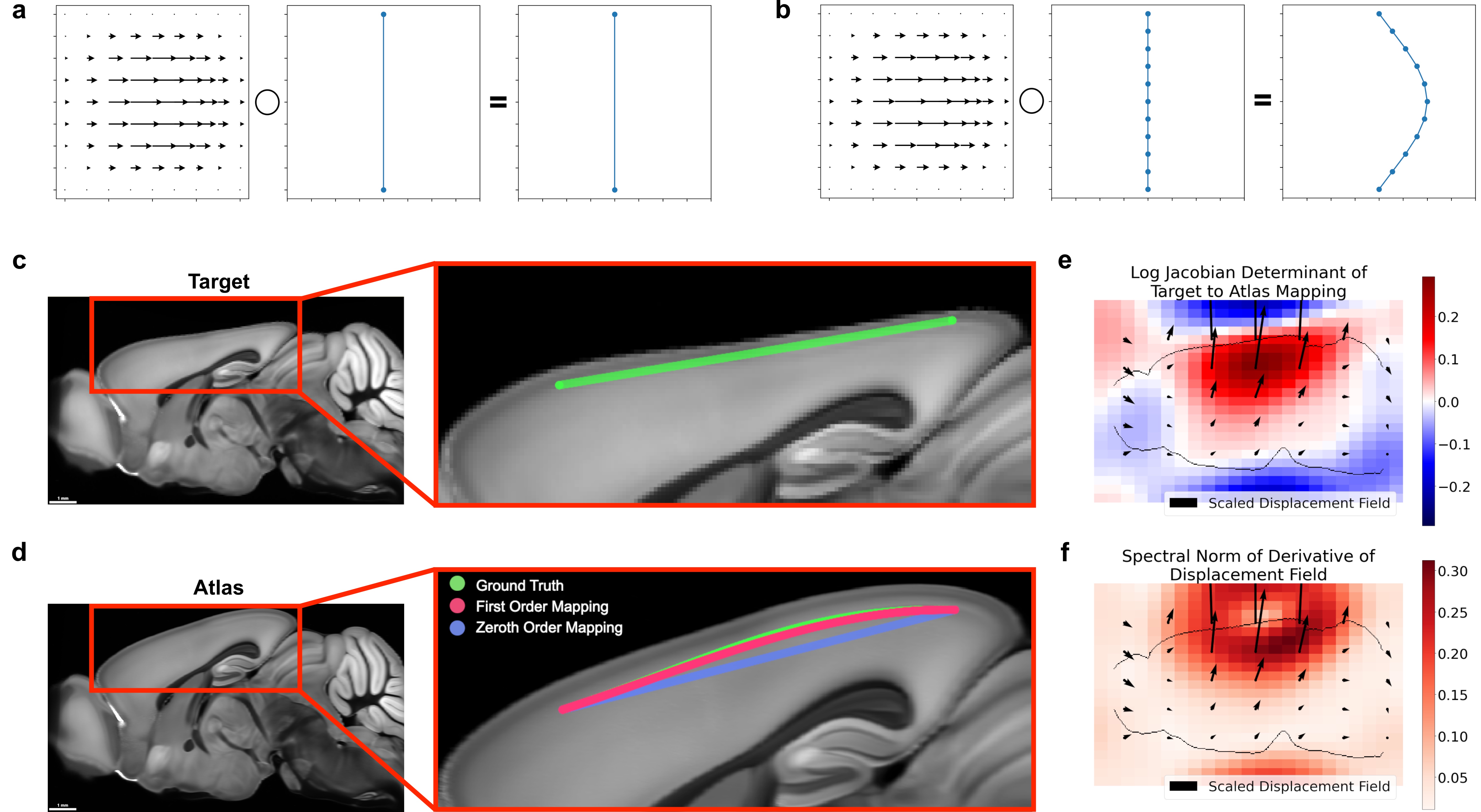}
\caption{Neglecting the action of a nonlinear mapping on a curve's derivatives can introduce errors. \textbf{a-b} Different samplings of a curve can lead to different results under nonlinear deformations, such as only sampling the endpoints (\textbf{a}) versus sampling several times along the curve (\textbf{b}). \textbf{c-d} Large distances between control points can contribute to mapping inaccuracies. The green line segment following cortical layers 2/3 in a synthetic mouse brain image (\textbf{c}) is defined only by its endpoints. Transforming only the positions of the endpoints (zeroth order mapping, \textbf{d}), is less accurate than incorporating the action on the derivatives as well (first order mapping, \textbf{d}). \textbf{e-f} Quantitative descriptions of the mapping from target to atlas via the logarithm of the Jacobian determinant, which quantifies expansion and compression (\textbf{e}), and the spectral norm of the displacement field, which plays a role in an error bound of zeroth order mapping (\textbf{f}).}
\label{fig:summary}
\end{figure}

\section*{Results}

\subsection*{Action of Diffeomorphisms on Discrete Samplings}

In the following sections, we use $C^k$ to represent the space of continuous functions with $k$ continuous derivatives, where the domain and range can be inferred by the context. We model a neuronal branch (dendrite or axon) as a regular 3D curve $c:[0,L] \rightarrow \mathbb{R}^3$, $c \in C^k$, and $\vert \dot c \vert > 0$. When a neuronal curve is traced, it is typically stored as a sequence of points  $\{x_i=c(t_i): t_i < t_{i+1}\}_{i=1}^n$, where the independent variables $t_i$ can be taken to be the indices of the points. When there is a diffeomorphism between coordinate systems $\phi: \mathbb{R}^3\rightarrow \mathbb{R}^3$, these traces are mapped via the group action:

\begin{align*}
    \phi \cdot \{x_i\}_{i=1}^n &= \{\phi(x_i)\}_{i=1}^n
\end{align*}

We want to extend the space of traces, and the associated action, to include derivatives of the underlying curve denoted $\partial_t c$. This can be done using the jet space $J^k$. In our setting, $J^k=[0,L] \times X^{(k)}$, where an element of $X^{(k)}$ is a $k+1$-tuple $(x^0,x^1,...,x^k) \in (\mathbb R^3)^{k+1}$ representing a position and first $k$ derivatives of a curve in $\mathbb R^3$. A $C^k$ curve $c: [0,L] \to \mathbb R^3$ can be extended to a curve $\hat c: [0,L] \to X^{(k)}$ simply by adding derivatives, with $\hat c(t) = (c(t), \partial_t c(t), \ldots, \partial_t^k c(t)) \in X^{(k)}$ \cite{olver1995equivalence}. 
%In our setting, we will define the jet space $J^k=[0,L]\times (\mathbb{R}^3)^{k+1}$ which consists of $t$, the one dimensional independent variable, and $x^{(k)}$, the $3(k+1)$ dimensional collection of $c$ and its first $k$ derivatives evaluated at $t$.

The $C^k$ diffeomorphisms have a natural group action on the jet space $J^k$, ensuring the commutation between the standard action of diffeomorphisms on curves, $(\phi, c) \mapsto \phi\circ c$ and their extensions, such that the identity $\phi\cdot \hat c(t) = \widehat{\phi\circ c}(t)$ holds for all curves $c$ and times $t$, defining the left-hand side. For example, for $k=2$, this provides
\[
\phi\cdot (t, x^{0}, x^{1}, x^{2}) = (t, \phi(x^{0}), D\phi(x^{0}) x^{1}, D\phi(x^{0}) x^{2} + D^2\phi(x^{0}) (x^{1}, x^{1}))
\]

%Since $\phi \circ c$ is a composition of $k$-times differentiable functions and is therefore $k$-times differentiable. In other words, we can compute not just the transformed positions $\phi \circ c(t_i)$, but also the transformed derivatives $\partial_t (\phi\circ c)(t_i)$ without knowing the full curve $c(t)$. 

Neuron traces, as mentioned before, involve a sequence of samples with time-stamps $\{(t_i,x^{(k)}_i)\}_{i=1}^n$, identified as elements of $(J^k)^n$, the $n$-fold Cartesian product of $J^k$. Our diffeomorphisms will act on such a sequence as follows:

\begin{statement}
For a sequence of time-stamped elements on the jet space, 
$T = \{(t_i, x_i^{(k)})\}_{i=1}^n$ in $(J^k)^n$, we define the action of diffeomorphisms

% where $c:[0,L] \rightarrow \mathbb{R}^3$ is a $\mathbf{C}^k$ regular curve, identify a set of $\mathbf{C}^k$ functions $\{f_i\}$ that map from $[0,L]$ to $\mathbb{R}^3$ and satisfy $f_i(t_i)=c(t_i)$ and $\partial_t^j f_i(t_i)=\partial_t^j c(t_i) \; j=1,...,k$. Then the following defines a group action of a $k$-times differentiable diffeomorphism $\phi$ on $T$. 

\begin{align}
    \phi \cdot T = \{(t_i, \phi\cdot x_i^{(k)})\}_{i=1}^n \label{eq:action}
\end{align}
% For a sequence of elements on the jet space, 
% $T = \{(t_i, x_i^{(k)}) : x_i^{(k)} = (c(t_i),\partial_t c(t_i),...,\partial_t^k c(t_i)) \}_{i=1}^n$, where $c:[0,L] \rightarrow \mathbb{R}^3$ is a $\mathbf{C}^k$ regular curve, identify a set of $\mathbf{C}^k$ functions $\{f_i\}$ that map from $[0,L]$ to $\mathbb{R}^3$ and satisfy $f_i(t_i)=c(t_i)$ and $\partial_t^j f_i(t_i)=\partial_t^j c(t_i) \; j=1,...,k$. Then the following defines a group action of a $k$-times differentiable diffeomorphism $\phi$ on $T$. 

% \begin{align}
%     \phi \cdot T = \{(t_i, \bar x_i^{(k)}) : \bar x_i^{(k)} = (c(t_i),\partial_t (\phi \circ f_i)(t_i),...,\partial_t^k (\phi \circ f_i)(t_i))\}_{i=1}^n \label{eq:action}
% \end{align}

\end{statement}

The fact that this operation provides an action is is an established result \cite{olver1995equivalence}, and the proof is provided in the Supplement. We will define \textit{$k$'th order discrete mapping} to be the action in Equation \ref{eq:action} of a diffeomorphism on a curve sampling that includes $k$ derivatives. The axioms that define group actions are important to verify because they ensure that applying the identity transformation does not change the object, and that applying a composition of transformations is equivalent to applying the individual transformations successively. Further, group actions can exchange mathematical structure between the acting group and the set being acted upon, and they are at the core of several important theorems \cite{suksumran2016gyrogroup}.

The $k$'th order discrete mapping method allows us to compute the first $k$ derivatives of the transformed curve. We will interpolate the transformed curve using splines of order $2k+1$ that satisfy the derivative values. For example, zeroth order mapping will produce a first order spline and first order mapping will produce a cubic Hermite spline \cite{spitzbart1960generalization}.

\label{sec:action}

\subsection*{Error Analysis of Zeroth and First Order Mapping}

Now we will examine the error introduced by zeroth order mapping, which is used by existing neuron mapping methods. First, note that under affine transformations, zeroth order mapping of piecewise linear curves introduce no error, so these results are only useful under non-affine transformations. The following results require that the curve $c$ be parameterized by arc length. However, we note that all continuously differentiable regular curves can be reparameterized by arc length \cite{smale1958regular}. We use $\vert \cdot \vert$ to denote the Euclidean norm for elements of $\mathbb{R}^d$, and the spectral norm for matrices.

% \begin{proposition}
% \textbf{[Zeroth Order Mapping Error Bound 1]} Say $\phi: \mathbb{R}^3 \rightarrow \mathbb{R}^3$ is a $C^1$ diffeomorphism and $c: [0,L] \rightarrow \mathbb{R}^3$ is a continuous, piecewise $C^1$ curve parameterized by arc length with knots $\{t_i: t_1=0, t_n=L, t_{i-1} < t_i\}_{i=1}^n$. For the transformed curve $f=\phi \circ c$, the zeroth order mapping defines a first order spline $g$ which satisfies:

% \begin{align}
%     \max_{t \in [0,L]} \vert f(t)-g(t)\vert &\leq \delta \sqrt{3} \max_{t \in [0,L]} \vert D\phi \circ c(t) \vert \label{eq:error}
% \end{align}

% \noindent
% where $\delta \triangleq \max_{2 \leq i\leq n} \vert t_i-t_{i-1}\vert$, and $D\phi \circ c(t)$ is the Jacobian of $\phi$ evaluated at $c(t)$.
% \end{proposition}

% The proof of the proposition is an adaptation of a spline approximation theorem in \cite{kincaid2009numerical}. In practice, the neuronal curves that are being mapped ($c\rightarrow \phi \circ c$) are often piecewise linear curves themselves. So, we present another bound in this setting, and when $\phi$ is close to the identity transformation.

\begin{proposition}
\textbf{[Zeroth Order Mapping Error Bound]} Say $\phi: \mathbb{R}^3 \rightarrow \mathbb{R}^3$ is a $C^1$ diffeomorphism and $c: [0,L] \rightarrow \mathbb{R}^3$ is a continuous, piecewise linear curve parameterized by arc length with knots $\{t_i: t_1=0, t_n=L, t_{i-1} < t_i\}_{i=1}^n$. For the transformed curve $f=\phi \circ c$, the zeroth order mapping defines a first order spline $g$ which satisfies:

\begin{align}
    \max_{t \in [0,L]} \vert f(t)-g(t)\vert &\leq \max_{i \in \{0,...,n\}, t \in [t_{i-1}, t_i]} \frac{1}{2} \left( \vert D\phi \circ c(t) - I \vert \vert t_i-t_{i-1} \vert + \vert \epsilon_i - \epsilon_{i-1}\vert \right) \label{eq:error2}
\end{align}

\noindent
where $\epsilon_i\triangleq c(t_i)-\phi(c(t_i))$ and $D\phi \circ c(t)$ is the Jacobian of $\phi$ evaluated at $c(t)$.
\end{proposition}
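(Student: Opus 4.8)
The plan is to localize the estimate to a single segment $[t_{i-1},t_i]$ and take the maximum over segments only at the end. Since the right-hand side of Equation \ref{eq:error2} is itself a maximum over $t$, and $|\epsilon_i-\epsilon_{i-1}|$ and $(t_i-t_{i-1})$ are constant on the segment, it suffices to produce for each $i$ a bound on $\max_{t\in[t_{i-1},t_i]}|f(t)-g(t)|$ by $\tfrac12\bigl(\sup_{[t_{i-1},t_i]}|D\phi\circ c-I|\,(t_i-t_{i-1})+|\epsilon_i-\epsilon_{i-1}|\bigr)$, which is exactly what that inner maximum evaluates to. On the segment, $g$ is the straight line joining $f(t_{i-1})=\phi(c(t_{i-1}))$ to $f(t_i)=\phi(c(t_i))$, i.e. the linear interpolant of $f$ at the endpoints, and because $c$ is piecewise linear and arc-length parameterized, $\dot c(t)=v_i$ with $|v_i|=1$ and $c(t_i)-c(t_{i-1})=(t_i-t_{i-1})v_i$. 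I would record these normalizations first, since the unit-speed property is precisely what later lets the Jacobian term appear multiplied by $(t_i-t_{i-1})$ rather than by a chord length.

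Next I would introduce the error $e=f-g$ on the segment and note that it vanishes at both endpoints, $e(t_{i-1})=e(t_i)=0$. Differentiating gives $\dot e(t)=D\phi(c(t))v_i-\frac{f(t_i)-f(t_{i-1})}{t_i-t_{i-1}}$, where the second term is constant in $t$. The crux is an algebraic rewriting of this derivative: writing $\phi=\mathrm{id}+u$ with displacement $u$, using $u(c(t_j))=-\epsilon_j$ on the finite difference and $D\phi(c(t))v_i=v_i+(D\phi(c(t))-I)v_i$ on the first term, I expect the two copies of $v_i$ to cancel and leave exactly $\dot e(t)=(D\phi(c(t))-I)v_i+\frac{\epsilon_i-\epsilon_{i-1}}{t_i-t_{i-1}}$. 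Taking norms and using $|v_i|=1$ then yields the pointwise derivative bound $|\dot e(t)|\le |D\phi\circ c(t)-I|+\frac{|\epsilon_i-\epsilon_{i-1}|}{t_i-t_{i-1}}$.

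Finally, since $e$ vanishes at both endpoints, I would estimate $e(t)$ by integrating $\dot e$ from whichever endpoint is nearer: $|e(t)|\le\min\bigl(\int_{t_{i-1}}^t,\int_t^{t_i}\bigr)|\dot e|\le\tfrac12\int_{t_{i-1}}^{t_i}|\dot e(\tau)|\,d\tau$. Inserting the derivative bound and estimating the Jacobian integral by $(t_i-t_{i-1})\sup_{[t_{i-1},t_i]}|D\phi\circ c-I|$ produces the factor $\tfrac12$ together with the two stated terms, uniformly in $t$; taking the maximum over $i$ and $t$ then finishes the proof.

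As for obstacles, I expect the only real one to be the algebraic collapse in the second step: keeping careful track of the identity matrix, the displacement field, and the sign convention $\epsilon_j=c(t_j)-\phi(c(t_j))$ so that precisely the two interpretable terms survive and the $v_i$ contributions cancel. A secondary point worth flagging is that $\phi$ is only $C^1$, so the classical linear-interpolation bound (which invokes a second derivative of $\phi$) is unavailable; the argument must stay at the level of first derivatives and the fundamental theorem of calculus, which is exactly why the endpoint term $|\epsilon_i-\epsilon_{i-1}|$ appears in place of a curvature term. The arc-length hypothesis is essential and should be cited wherever $|v_i|=1$ is used.
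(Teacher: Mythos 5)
Your proposal is correct and follows essentially the same route as the paper's proof: localize to one segment, apply the fundamental theorem of calculus to the error $e=f-g$ (which vanishes at both knots), perform the same algebraic cancellation that leaves $(D\phi\circ c - I)v_i$ plus the $\epsilon_i-\epsilon_{i-1}$ term, gain the factor $\frac{1}{2}$ by integrating from the nearer endpoint, and invoke arc-length parameterization to replace the chord length by $\vert t_i-t_{i-1}\vert$. The only differences are cosmetic: the paper reparameterizes each segment to $[0,1]$ and argues by symmetry for $t\leq\frac{1}{2}$, whereas you stay on $[t_{i-1},t_i]$ and bound the minimum of the two one-sided integrals by half the full integral.
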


% In summary, we have presented two bounds on the maximum deviation of the true transformed curve from the approximation obtained by zeroth order mapping. Eq. \ref{eq:error} applies to all piecewise differentiable curves but involves $\vert D\phi \vert$ which does not vanish even if $\phi$ is the identity map (in which case zeroth order mapping has zero error for piecewise linear curves). Eq. \ref{eq:error2} only applies to piecewise linear $c$, but goes to zero as $\phi$ approaches the identity map. Both bounds depend on the spectral norm of $D\phi$, and arc lengths of the original curve segments. We note that $\log \vert D\phi\vert$ is the finite time Lyapunov exponent, a well-known quantity in field dynamics which characterizes the amount of stretching in a differentiable flow. 

This shows how the error introduced by the state of the art mapping method is related to the displacement magnitude, $\epsilon$, and the extent to which the Jacobian of the transformation, $D\phi$, differs from the identity matrix. Note that the bound in Eq. \ref{eq:error2} goes to zero as $\phi$ approaches the identity map (in which case zeroth order mapping has zero error for piecewise linear curves). It depends on the arc lengths of the original curve segments and the spectral norm of $D\phi$, which is related to the finite time Lyapunov exponent ($\log \vert D\phi\vert$), a well-known quantity in field dynamics which characterizes the amount of stretching in a differentiable flow. Also, the bound applies to $\max_{t \in [0,L]} \vert f(t)-g(t)\vert$, which is not parameterization invariant, and therefore not a strictly geometric quantity. However we note that this quantity is an upper bound of the Frechet distance, which is parameterization invariant. 

In this paper we demonstrate first order mapping in an effort to mitigate this mapping error. Such a method has the advantage of having superior error convergence at the knots as a consequence of Taylor's theorem. Further, we present a set of error bounds that helps clarify the advantage of first order mapping.

\begin{proposition}
\textbf{[Comparable Bounds for Zeroth and First Order Mapping]} Say $\phi: \mathbb{R}^3 \rightarrow \mathbb{R}^3$ is a $C^4$ diffeomorphism and $c: [a,b] \rightarrow \mathbb{R}^3$ is a continuous, piecewise $C^4$ curve parameterized with knots $\{t_i: t_1=a, t_n=b, t_{i-1} < t_i\}_{i=1}^n$. For the transformed curve $f=\phi \circ c$ defined by coordinate functions $f=(f^0,f^1,f^2)^T$, the zeroth order mapping defines a first order spline $g_0$ which satisfies:

\begin{align}
    \max_{t \in [a,b]} \vert f(t)-g_0(t)\vert &\leq \frac{\sqrt{3}}{4} \max_{t \in [a,b], j \in \{0,1,2\}} \vert \partial^{(4)}_t f^j(t)\vert \left(\frac{\delta}{2}\right)^4 + \nonumber \\
    &\frac{\sqrt{3}}{2} \left(\frac{\delta}{2}\right)^2 \max_{i \in \{1...n\}, j \in \{0,1,2\}} \vert \partial^{(3)}_t f^j(t_i)\vert \left(\frac{\delta}{2}\right) + \nonumber \\
    & \frac{\sqrt{3}}{2} \left(\frac{\delta}{2}\right)^2 \max_{i \in \{1...n\}, j \in \{0,1,2\}} \vert \partial^{(2)}_t f^j(t_i)\vert \label{eq:compare0}
\end{align}

\noindent 
where $\delta\triangleq \max_{2 \leq i\leq n} \vert t_i-t_{i-1}\vert$ and $\partial^{(k)}_t f^j(t)$ is the $k$'th derivative of $f^j$ evaluated at $t$. Also, the first order mapping defines a third order spline $g_1$, which satisfies

\begin{align}
    \max_{t \in [a,b]} \vert f(t)-g_1(t)\vert &\leq \frac{\sqrt{3}}{4!} \max_{t \in [a,b], j\in \{0,1,2\}} \vert \partial^{(4)}_t f^j(t)\vert \left(\frac{\delta}{2}\right)^4 \label{eq:compare1}
\end{align}

\noindent

and we note that the bound in \ref{eq:compare1} is tighter than the bound in \ref{eq:compare0}.
Further, there exists a transformed curve $f$ and a set of knots $\{t_i\}_{i=1}^n$ that achieves both bounds exactly.

\end{proposition}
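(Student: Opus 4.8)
The plan is to prove both bounds coordinate by coordinate using classical interpolation remainder formulas, and then assemble the three scalar estimates into the Euclidean bound through $|v| \le \sqrt{3}\max_j |v^j|$ for $v \in \mathbb{R}^3$, which is the source of the common factor $\sqrt{3}$. First I would pin down what $g_0$ and $g_1$ are as interpolants of $f$. Since zeroth order mapping transports only the knot positions, $g_0$ is the piecewise linear interpolant with $g_0(t_i) = \phi(c(t_i)) = f(t_i)$. Since first order mapping additionally transports the derivative through $D\phi(c(t_i))\dot c(t_i)$, and the chain rule gives $\dot f(t_i) = D\phi(c(t_i))\dot c(t_i)$, the spline $g_1$ is exactly the cubic Hermite interpolant matching both $f(t_i)$ and $\dot f(t_i)$ at every knot. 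On each subinterval $[t_{i-1},t_i]$, where $f=\phi\circ c$ is $C^4$, I can then invoke the standard remainder formulas for these two schemes.

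For the first order bound \eqref{eq:compare1} I would apply the cubic Hermite remainder: for each coordinate there is $\xi \in (t_{i-1},t_i)$ with $f^j(t)-g_1^j(t) = \frac{(f^j)^{(4)}(\xi)}{4!}(t-t_{i-1})^2(t-t_i)^2$. The polynomial factor is maximized at the subinterval midpoint, where it equals $(\delta_i/2)^4 \le (\delta/2)^4$; bounding $(f^j)^{(4)}$ by its maximum and applying the $\sqrt{3}$ estimate yields \eqref{eq:compare1} immediately.

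The zeroth order bound \eqref{eq:compare0} is where the real work lies, since its right-hand side is not the textbook $\tfrac{\delta^2}{8}\|f''\|_\infty$ estimate but a three-term expansion in derivatives evaluated \emph{at the knots}. I would start from the linear remainder $f^j(t)-g_0^j(t) = \tfrac12 (f^j)''(\xi)(t-t_{i-1})(t-t_i)$, $\xi \in (t_{i-1},t_i)$, and bound the quadratic factor by $(\delta/2)^2$. The crucial move is to Taylor expand $(f^j)''(\xi)$ to second order about the knot $t_*$ \emph{nearest} to $\xi$, rather than about a fixed endpoint: because $\xi$ lies within half a subinterval of its nearest knot, $|\xi - t_*| \le \delta_i/2 \le \delta/2$, so the zeroth, first, and second order Taylor terms contribute $\max_i |(f^j)''(t_i)|$, $(\delta/2)\max_i |(f^j)'''(t_i)|$, and $\tfrac12(\delta/2)^2\max|(f^j)^{(4)}|$. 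Multiplying by the prefactor $\tfrac12(\delta/2)^2$ reproduces the coefficients $\tfrac12,\tfrac12,\tfrac14$ of \eqref{eq:compare0} exactly, and the $\sqrt{3}$ estimate closes the vector bound. Expanding about the nearest knot, so that each displacement is controlled by $\delta/2$ rather than $\delta$, is precisely the step that delivers the stated constants rather than looser ones.

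The comparison is then immediate: the right-hand side of \eqref{eq:compare1} equals $\tfrac{\sqrt3}{24}(\delta/2)^4\max|f^{(4)}|$, which is dominated by the first term $\tfrac{\sqrt3}{4}(\delta/2)^4\max|f^{(4)}|$ of \eqref{eq:compare0} alone (since $\tfrac1{24}<\tfrac14$), while \eqref{eq:compare0} carries two further nonnegative terms. For the achievability claim I would exhibit an explicit polynomial witness on a single interval with three identical coordinate functions, and I expect this to be the main obstacle, because the extremal curves for the two schemes differ: the cubic bound is saturated, with nonzero midpoint error, by a quartic of constant fourth derivative, whereas for any such quartic the linear interpolation error falls strictly below \eqref{eq:compare0}. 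A single curve meeting both bounds with genuine equality therefore forces a degenerate configuration, and the natural candidate is a quadratic curve, whose linear interpolation error saturates the surviving second-derivative term of \eqref{eq:compare0} at the midpoint while its vanishing higher derivatives make \eqref{eq:compare1} an equality with both sides zero. Verifying that this configuration, namely the midpoint location of the extremum, the vanishing of the third and fourth order terms, and the alignment across coordinates, simultaneously saturates both inequalities is the delicate bookkeeping I would carry out last.
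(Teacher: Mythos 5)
Your proposal is correct and follows essentially the same route as the paper's proof: both bounds come from the classical linear and cubic Hermite interpolation remainder formulas applied coordinatewise with the $\sqrt{3}$ vector estimate, the interior second derivative is controlled by knot values of the second, third, and fourth derivatives with the half-interval displacement $\delta/2$, and sharpness is witnessed by the same aligned quadratic (the paper uses $f(t)=(1-t^2,1-t^2,1-t^2)$ on $[-1,1]$ with knots $\pm 1$, giving error and bound both equal to $\sqrt{3}$ for zeroth order and $0=0$ for first order). The only cosmetic difference is that you obtain the knot expansion of $(f^j)''(\xi)$ by Taylor expanding about the nearest knot, whereas the paper integrates the fourth derivative twice from each endpoint and notes the combined quadratic bounds are maximized at the midpoint; the constants produced are identical.
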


Thus, we have made a connection between the state of the art (zeroth order mapping) and a higher order method (first order mapping) via worst-case bounds on mapping error. The error bound for first order mapping is smaller than that for zeroth order mapping, though for any given curve, either method may produce smaller error than the other. Proofs for the propositions are in the supplement.

% which states that for a function $f:\mathbb{R} \rightarrow \mathbb{R}$ that is $k$ times differentiable at $x=a$:

% \begin{align*}
%     f(x) - \sum_{j=1}^k \frac{f^{(j)}(a)}{j!} (x-a)^j=o((x-a)^k)
% \end{align*}

% i.e. the $k$-th order Taylor approximation converges to $0$ at order $k$. This bound is known to be tight and therefore the error of the first order mapping has superior convergence at the knots.
\label{sec:firstorder}

\subsection*{Software Implementation}

We implemented a first order discrete mapping method in our our open-source
Python package brainlit. In accordance with original SWC formulation \cite{stockley1993system, cannon1998line}, we compute one-sided derivatives at the knots of the curve from first order splines. Then, once the knot positions and derivatives are transformed, we generate a continuous curve in the new space using Hermite interpolation. Further details of our implementation can be found in the Methods.

Figure \ref{fig:toy} shows examples of our method on simulated data, compared to the zeroth order method, and the ``ground truth'' where we map a dense sampling of points along the first order spline of the original curve.

\begin{figure}[ht]
\centering
\includegraphics[width=\textwidth]{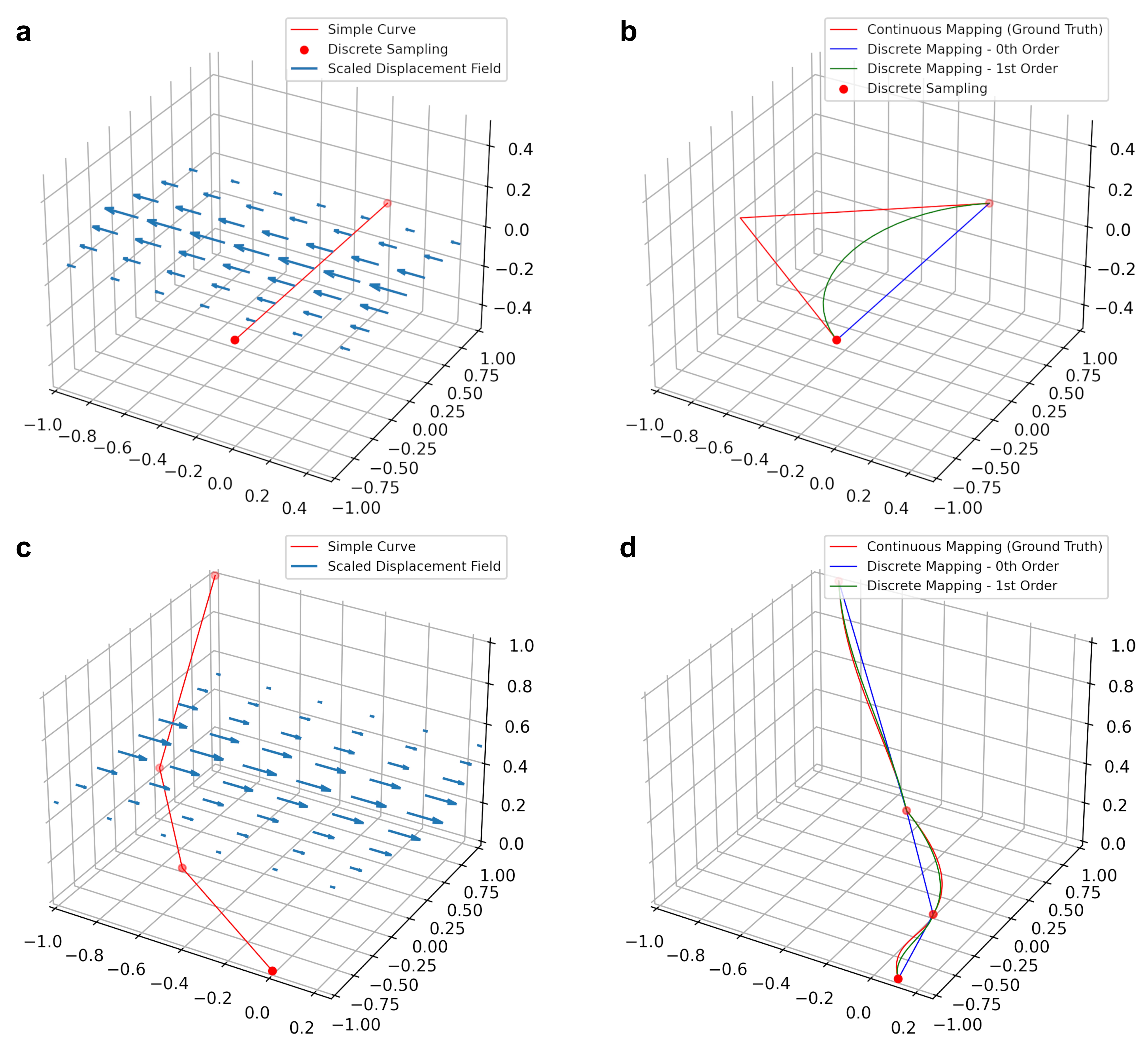}
\caption{Preserving derivative information can mitigate errors when transforming discretized curves. \textbf{a-b} Applying a nonlinear deformation field to a single line segment (\textbf{a}) using zeroth and first order mapping (\textbf{b}). \textbf{c-d} Applying a nonlinear deformation field to a piecewise linear curve (\textbf{c}) using zeroth and first order mapping (\textbf{d}). Zeroth and first order discrete mapping methods are shown relative to ground truth considered to be the application of the vector field to a dense sampling of the original curves.}
\label{fig:toy}
\end{figure}

\label{sec:software}

\subsection*{Application to Real Neurons}

\def\sigone{80}
\def\sigtwo{160}
\def\sigthree{320}
\def\sigfour{640}

We applied our method to $20$ reconstructed neurons in SWC format from a whole mouse brain image from the Janelia MouseLight project \cite{winnubst2019reconstruction}. We selected the first $20$ SWC files that successfully downloaded from MouseLight's NeuronBrowser repository and did not have repeat trace nodes. Neurons have a tree-like  structure, and we split them into non-branching curves in order to apply our mapping methods. We follow a method introduced previously \cite{athey2021spline} where the root to leaf path with the longest arc length is recursively removed until the tree is reduced to non-bifurcating ``branches''. 

We generate random transformations using the Large Deformation Diffeomorphic Metric Mapping (LDDMM) framework described in Miller et al. and applied in Tward and Miller \cite{miller_geodesic_2006, tward_complexity_2017}. We generate an initial momentum field by sampling Gaussian noise with zero mean and varying standard deviation, $\sigma$. The momentum is smoothed to construct a velocity field, and integrated in time according to the conservation laws established in Miller et al. to generate a diffeomorphic transformation \cite{miller_geodesic_2006}. We generated four diffeomorphisms with $\sigma$ levels of $\sigone$, $\sigtwo$, $\sigthree$ and $\sigfour$ $\mu m/\text{time}$. The position and tangent displacement profiles of these four diffeomorphisms are shown in Figure \ref{fig:results}a. We centered the neuron traces at the origin then applied the random diffeomorphisms to compare zeroth and first order mapping to ground truth (Fig. \ref{fig:results}b-g). Ground truth was generated by upsampling the original traces to a maximum node spacing of $2 \mu m$ followed by zeroth order mapping.

\begin{figure}[ht]
\centering
\includegraphics[width=\textwidth]{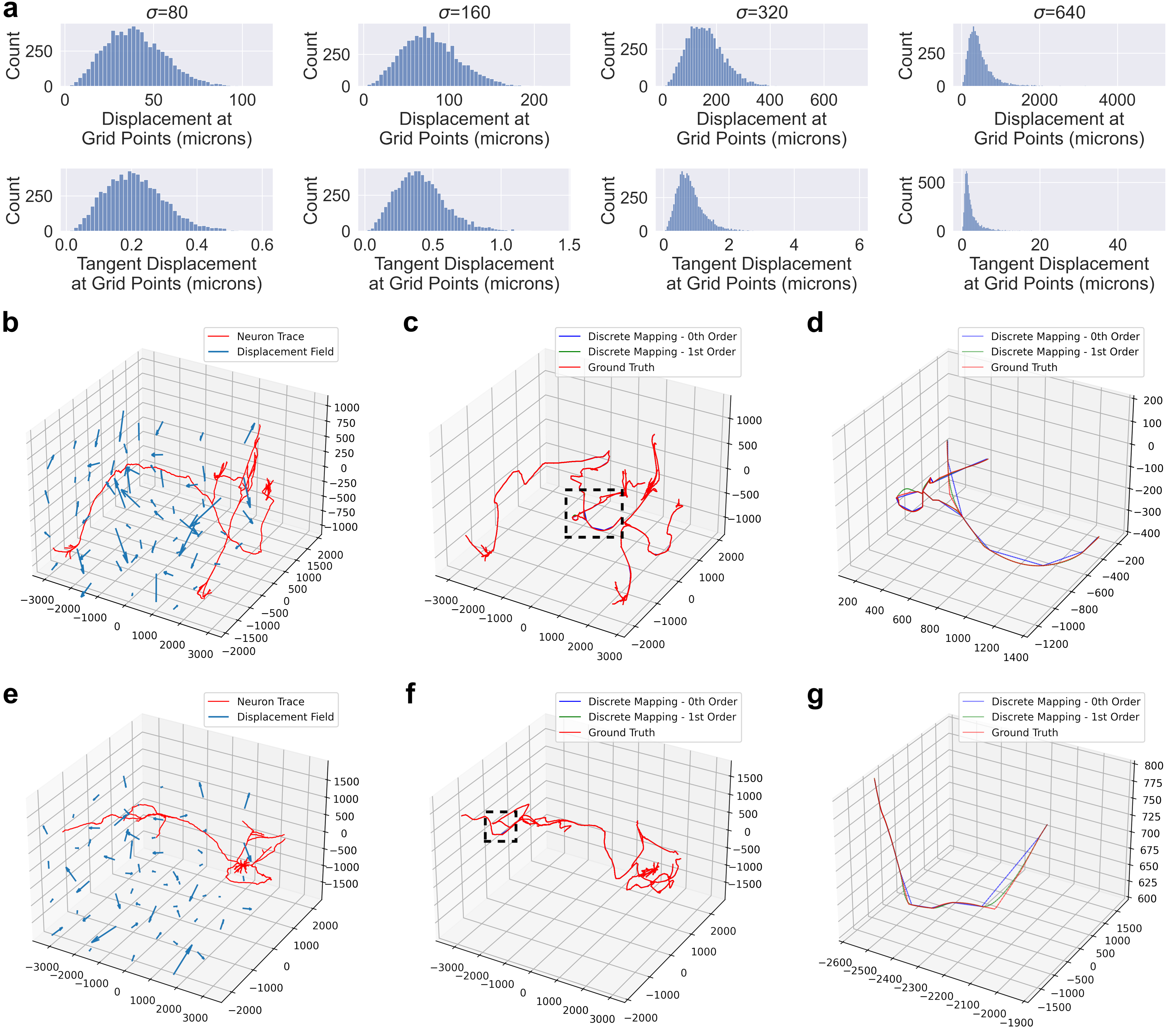}
\caption{Application of zeroth and first order mapping of neuron traces under diffeomorphisms derived from random Gaussian initial momenta. \textbf{a} Different values of $\sigma$ produced diffeomorphisms with different position and tangent vector displacement profiles. The positions and tangent vectors sampled in the histogram were distributed as a uniform grid with a spacing of $500 \mu m$. \textbf{b-g} Two examples of the diffeomorphism with $\sigma=\sigfour$ applied to neuron traces to produce zeroth and first order mappings, along with ground truth. Both examples show the original trace and the transformation \textbf{(b,e)}, the results of the different transformation methods \textbf{(c,f)}, and a zoomed in view of the region outlined by the dotted line to show discrepancies between the methods. Plot axes are in units of microns.}
\label{fig:results}
\end{figure}

For each neuron trace, we computed the discrete frechet error from ground truth (Fig. \ref{fig:stats}a). We also wanted to measure which mapping method better matched the ground truth with respect to a neuron's distribution of common morphometric quantities, such as path angle, branch angle, tortuosity, and segment length. We used the Kolmogorov-Smirnov test statistic to measure how much the distribution of these quantities differed from ground truth (Fig. \ref{fig:stats}b). We performed two-sided Wilcoxon signed-rank tests for each comparison and used a Bonferroni correction across the different $\sigma$ values (Fig. \ref{fig:stats}b). Lastly, we compared the discrete frechet errors the average sampling period of the trace i.e. the average distance between trace nodes (Figure \ref{fig:stats}c).

\begin{figure}[ht]
\centering
\includegraphics[width=\textwidth]{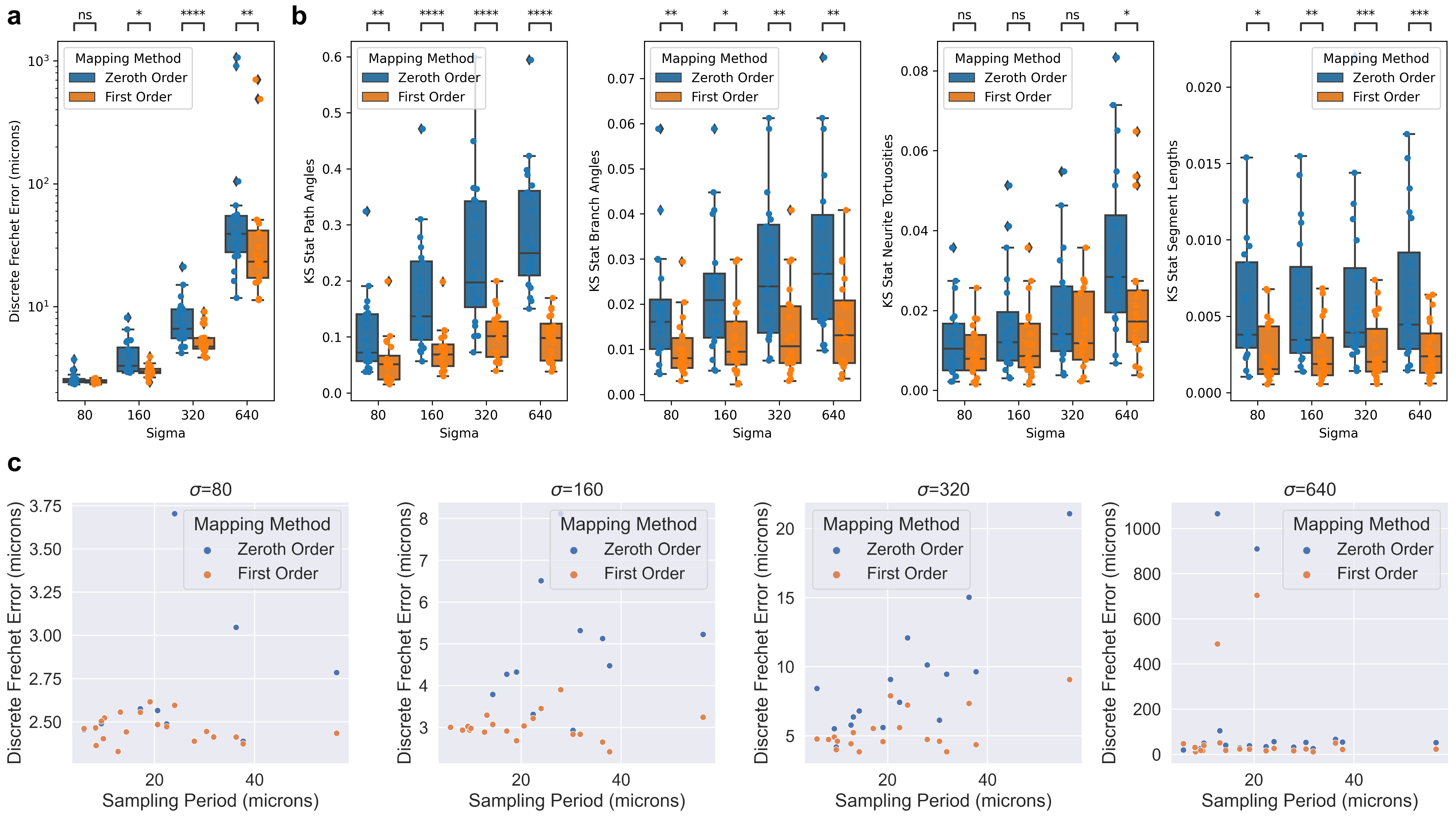}
\caption{Comparison of zeroth and first order mapping of neuron traces under random diffeomorphisms. \textbf{a} Discrete Frechet error was computed between the different order mappings, and ground truth. \textbf{b} Distributions of common morphometric quantities were compared to that of ground truth using the Kolmogorov-Smirnov test statistic. Differences between zeroth and first order methods were tested using Wilcoxon signed-rank test with Bonferroni correction across different values of $\sigma$ ($\ast: p \leq 0.05, \ast \ast: p \leq 0.01, \ast \ast \ast: p \leq 0.001, \ast\ast\ast\ast\: p\leq 0.0001$). Box plots show median, upper and lower quartiles and whiskers have a maximum length of 1.5x the interquartile range with other outlier data marked with points. \textbf{c} Relationship between discrete frechet error and average sampling period (distance between trace points) under the random diffeomorphisms.}
\label{fig:stats}
\end{figure}

To explore the effect of downsampling neuron traces on mapped morphologies, we identified non-branching nodes in straight portions of the trace, and measured the impact of removing those nodes from the trace. Specifically, we performed first order mapping on the segment with the node removed, and compared it to the ground truth mapping of the original segment. We determined which fraction of nodes maintained a discrete frechet error less than one micron, serving as an estimate for the fraction of nodes which are not necessary to maintained the mapped morphology (Fig. \ref{fig:removal}).

\begin{figure}[ht]
\centering
\includegraphics[width=\textwidth]{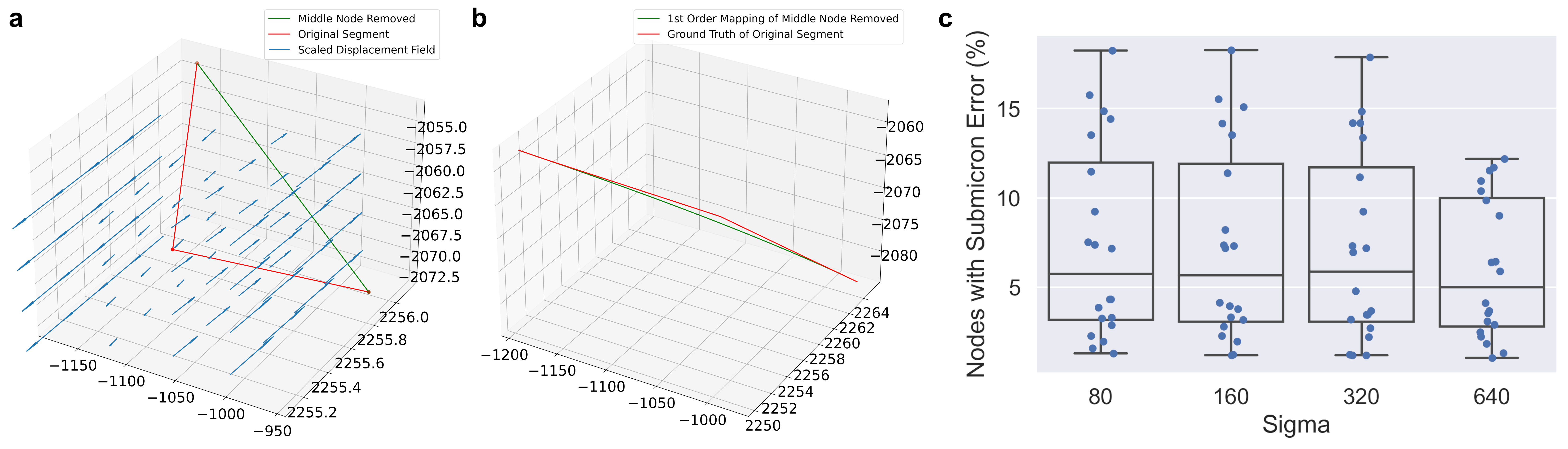}
\caption{Counting how many nodes in MouseLight neuron traces can be removed without affecting the mapped morphology. \textbf{a} For each non-branching node with path angle above 170 degrees, we generated a line segment with that node removed. \textbf{b} We performed first order mapping on the downsampled line segment and compared the result with the ground truth mapping of the original curve. \textbf{c} For each neuron trace, we determined the fraction of nodes where the discrete frechet error is less than or equal to one micron under the four random diffeomorphisms. Box plots show median, upper and lower quartiles and whiskers have a maximum length of 1.5x the interquartile range with other outlier data marked with points.}
\label{fig:removal}
\end{figure}

\label{sec:realneurons}

\section*{Discussion}

\def\sigone{80}
\def\sigtwo{160}
\def\sigthree{320}
\def\sigfour{630}

In this paper we examine the ``naive'' approach to mapping discretely sampled one-dimensional structures by simply transforming the positions of the knots, i.e. mapping line segments to line segments. We show that this method can be inaccurate when the Jacobian of the transformation is non-constant. We describe how to preserve derivative information which will lead to more accurate mappings in neighborhoods of the knots. We offer an implementation of a first-order mapping technique which, empirically, is more accurate on discretely sampled differentiable curves. We also apply our method to real neuron reconstructions and show that it more accurately matches ground truth in both frechet error, and a variety of morphometric quantities.

In our experiment with real neuron reconstructions, it is important to note what we are considering ground truth. Since the original reconstructions are in SWC format, only the knot positions are known, and the neurons are typically represented as piecewise linear structures. Real neuron morphologies are not piecewise linear, and instead are continuously curving as they pass through dense brain tissue. Nonetheless, because we have no further information about the neuron trajectories, we consider the original reconstructions to be piecewise linear, and generate the ground truth mappings by transforming the straight lines between the knots. 

The transformations in our experiments were generated by ``shooting'' a random initial momenta field \cite{miller_geodesic_2006}. In neuromorphology studies, transformations are typically generated via image registration to an atlas for which several approaches exist \cite{toga_role_2001, chandrashekhar2021cloudreg}. This work is only relevant to non-affine registration techniques since affine transformations preserve straight lines. The utility of higher order mapping depends on the extent to which the brain sample is deformed nonlinearly. In practice, investigators can look at the profiles of position and tangent vector displacements to identify which regime ($\sigma$ level) is most similar to their transformation (Fig. \ref{fig:results}a). At low values of $\sigma$, Frechet error of both zeroth and first order methods are in the range of $1-10$ microns (Fig. \ref{fig:results}c), which is likely negligble for mesoscale neuromorphology. However, under more extreme transformations, the first order mapping offers a more significant improvement in both Frechet error and distributions of morphometric quantities (Fig. \ref{fig:results}c,d).

As mentioned previously, existing mapping methods use zeroth order mapping. Investigators can use the error bound in Eq. \ref{eq:error2} to determine whether zeroth order mapping is adequate. If Jacobian and displacement values of the transformation at hand are not easily accessible, our empirical results can offer guidance. For example, we found that under less extreme transformations ($\sigma=\sigone,\sigtwo$), the frechet errors remained below ten microns for both zeroth and first order methods. However, as transformations got more extreme, it became more important to either keep the sampling period small, or to use first order mapping. Specifically, if the sampling period was less than ten microns, then both zeroth and first order mapping had low error. For higher sampling periods, first order mapping offered more significant improvements.

Conversely our results can be used to make manual tracing more efficient. If the registration transformation, $\phi$, is know a priori, and there are stretches where a neuronal branch is straight, then it is possible to compute the minimum sampling rate while still controlling the amount of error introduced during mapping to atlas coordinates. The neuron trace files examined here are at most a couple megabytes, so this approach is not likely produce significant data storage gains. However, it could allow manual tracers to sample more sparsely along straight stretches of axons, possibly leading to faster reconstruction. As a preliminary experiment, we computed the fraction of nodes which could be removed, while maintaining a submicron error after first order mapping (Fig. \ref{fig:removal}). On average, over 5\% of nodes achieved submicron error, though this fraction decreased with larger sigma, indicating the importance of a higher sampling rate under more extreme transformations. It is important to note that since each node was examined individually, it is not necessarily the case that removing all the nodes together would maintain submicron error. In the worst case, if all the nodes were located consecutively along the trace, only every other node could be removed to maintain submicron error. Further, it is unknown whether skipping the nodes identified in our experiment would have saved time in the MouseLight tracing protocol. A proper experiment to test this hypothesis would involve both registration and neuron reconstruction in real whole-brain images and thus is reserved as a potential avenue of future study. However, given that manual tracing remains a bottleneck and requires several person-hours per neuron \cite{winnubst2019reconstruction}, making tracing process just a couple percentage points faster would tangibly accelerate neuromorpholgical experiments. 

It may be tempting to use our ``ground-truth'' mapping method, i.e. upsampling a linear interpolation then performing zeroth order mapping, as a neuron mapping method. While this may be appropriate in some settings, this approach has two primary disadvantages. First, as stated before, neurons are not piecewise linear structures so, while the knot positions can be generally regarded as lying on the neuron, the linear interpolation cannot. Therefore, it would be necessary to keep track of which knots are from the original trace, and which knots are from the upsampling in order to preserve the original trace information. This would require existing file formats to expand their metadata conventions. Secondly, for large traces, the upsampled data could become computationally cumbersome to store.

Lastly, we want to highlight work in the adjacent field of neuron reconstruction where algorithms such as \cite{li2020brain} can convert reconstruction knots into dense image segmentations which capture neuron trajectories at finer resolutions. Algorithms to automatically trace images of single neurons have been under development for decades \cite{peng2015diadem, athey2022hidden}. They could be adapted to generate both denser neuron samplings, and more accurate derivative estimates at the sampled points. These methods could improve both zeroth and first order mapping methods, so weighing these effects alongside the accuracy required for the given scientific goal would help determine which mapping method is appropriate.

\section*{Methods}

\subsection*{Software Implementation}

In order to implement a first order action method that transforms neuronal curves, we needed to address two questions. The first is how to estimate derivatives in the original discretely sampled curve. The second is, once the knot positions and derivatives are transformed, how can they be used to generate a continuous curve in the new space.

\subsubsection*{One Sided Derivatives from First Order Splines}

The original trace points are assumed to represent the knots in a first order spline, i.e. the points are linearly interpolated. In this representation, derivatives do not necessarily exist at the knots, but one-sided derivatives do and can be easily computed using the difference of consecutive knot positions. Both one-sided derivatives ($\dot c(t_i^-),\dot c(t_i^+)$) are transformed according to Eq. \ref{eq:action} and used to generate the transformed curve.

% Say our curve is decomposed into its three coordinate functions $c(t)=[c_1(t),c_2(t),c_3(t)]^T$, and say all the coordinate functions are discretely sampled at knots $\{t_k\}_{k=1}^K$, where the knots are not necessarily evenly spaced. In order to estimate derivatives of $c_i$ at the knots, we use a central difference method that is adapted to the nonuniform sampling of the function and is second order accurate \cite{sundqvist1970simple}. For all knots except the first and last knot, the approximation of the derivative of a function $c_i$ at knot $t_k$ (denoted $\hat{c_i}'(t_k)$) is given as:

% \begin{align*}
%     \hat{c_i}'(t_k) \propto  c_i(t_{k+1})-\left( \frac{t_{k+1}-t_k}{t_k-t_{k-1}}\right)^2 c_i(t_{k-1})-\left[ 1-\left( \frac{t_{k+1}-t_k}{t_k-t_{k-1}}  \right)^2 \right] c_i(t_k)
% \end{align*}

% The derivative estimates at the end knots are given as:

% \begin{align*}
%     \hat{c_i}'(t_1) &\propto c_i(t_2)-c_i(t_1) &\text{The ``forward'' difference} \\
%     \hat{c_i}'(t_1) &\propto c_i(t_K)-c_i(t_{K-1}) &\text{The ``backward'' difference} \\
% \end{align*}

% Then, all derivatives are scaled so that $\vert \vert \nabla \hat{c}(t_k)\vert \vert=1$ for all $k$.

\subsubsection*{Fitting Curve to Transformed Positions and Derivatives}

For the $i$'th curve segment, the positions $c(t_{i-1}), c(t_i)$ and one sided derivatives $c(t_{i-1}^+), c(t_i^-)$
present four constraints for the interpolating curve. We use these constraints to define a cubic polynomial between each pair of knots, which is known as Hermite interpolation \cite{kincaid2009numerical}. The result is a third order spline. Specifically, we use the scipy implementation of cubic Hermite splines \cite{2020SciPy-NMeth}. It is important to note that this spline is still not necessarily differentiable at the knots.

\subsection*{Quantitatively Comparing Curves}

As described in the Results, the ground truth was considered to be the zeroth order mapping of sampling every $2$ microns along the piecewise linear trace. The splines defined by zeroth and first order mapping were sampled at the same values of the independent variable as the ground truth (every $2$ microns of arc length before the transformation) then compared with ground truth. We used the package from Jekel et al. to compute discrete frechet distance \cite{Jekel2019}. Discrete Frechet distance is an approximation of, and upper bound to Frechet distance \cite{eiter1994computing}. We used nGauge to compute morphometric quantities and SciPy to perform Kolmogorov-Smirnov statistics \cite{walker_ngauge_2022, 2020SciPy-NMeth}.

Further details about our implementation can be found in our open-source Python package brainlit: \\
http://brainlit.neurodata.io/.

\clearpage

\section*{Data Availability}

The neuron traces used in this work were from the MouseLight project's NeuronBrowser website: https://ml-neuronbrowser.janelia.org/. Specifically, we used traces: AA-1087, 1089-1093, 1149, 1192, 1196, 1223, 1413, 1417, 1477, 1493, 1537-1540, 1543 and 1548.

\section*{Code Availability}

The code used in this work is available in our open-source Python package brainlit: http://brainlit.neurodata.io/.

\section*{Acknowledgements}

This work is supported by the National Institutes of Health grants RF1MH121539, P41EB015909, R01NS086888, U19AG033655, RO1AG066184-01 the National Science Foundation grants 2031985, 2014862, and the CAREER award and the National Institute of General Medical Sciences Grant T32GM119998. We thank the MouseLight team at HHMI Janelia for providing us with access to this data, and answering our questions about it.

\section*{Ethics declaration}

Under a license agreement between AnatomyWorks and the Johns Hopkins University, Dr. Miller and the University are entitled to royalty distributions related to technology described in the study discussed in this paper. Dr. Miller is a founder of and holds equity in AnatomyWorks. This arrangement has been reviewed and approved by the Johns Hopkins University in accordance with its conflict of interest policies. The remaining authors have no conflicts of interest to declare.

\section*{Author Contribution Statement}

T.L.A. developed the mathematical details and the software, executed the experiments and contributed to the writing of this paper. D.J.T. contributed to the experimental design, and writing of the paper. U.M. contributed to the conceptual development, and data availability. L.Y. formulated the mathematical framework, contributed to the experimental design and the writing of the paper. J.T.V. contributed to the scope and design of the experiments. M.I.M. contributed to the scope of the experiments and the writing of the paper.

\section*{Inclusion and Ethics Statement}

The purpose of this work is to investigate a computational technique that is being used in the brain mapping community. The roles and responsibilities of the authors were discussed prior to the research.  Efforts were made to make the work accessible by, for example, providing data used in the experiments, and maintaining an open-source repository of the code with extensive documentation.

\section*{Research Animals Statement}

The mouse projection neuron traces came from the MouseLight project's NeuronBrowser website and the experimental protocols that generated this data can be found in Winnubst et al. 2019 \cite{winnubst2019reconstruction}. 

\bibliographystyle{plainnat}
\bibliography{refs}

\section*{Supplement}
\setcounter{equation}{0}
\setcounter{proposition}{0}
\setcounter{statement}{0}

\section*{Mathematical Proofs}

We use $\vert \cdot \vert$ to denote the Euclidean norm for elements of $\mathbb{R}^d$, and the spectral norm for matrices.

\begin{statement}
For a sequence of time-stamped elements on the jet space, 
$T = \{(t_i, x_i^{(k)})\}_{i=1}^n$ in $(J^k)^n$, we define the action of diffeomorphisms

\begin{align}
    \phi \cdot T = \{(t_i, \phi\cdot x_i^{(k)})\}_{i=1}^n \label{sup-eq:action}
\end{align}
\end{statement}
\begin{proof}
For a regular differentiable curve $c:[0,L]\rightarrow \mathbb{R}^3$ with extension $\hat{c}:[0,L]\rightarrow X^{(k)}$, we defined $\phi \cdot \hat{c}$ as the extension, $\widehat{\phi \circ c}$. In practice, only the finite sampling $\{x^{(k)}_i\}_{i=1}^n$ is accessible. However, it is always possible to define a curve $c$ that agrees with this sampling, such as with a polynomial \cite{olver1995equivalence}. Then, we can apply $\phi$ to this curve, and compute the transformed positions and derivatives, defining the action on the finite sampling. Now, we verify the axioms of a group action. 
% First, we need to show that the functions $f_i$ exist. A natural choice of $f_i$ is the $k$-th order polynomial which is infinitely differentiable:

% \begin{align*}
%     f_i(t) = c(t_i) + \sum_{j=1}^k \frac{1}{j!} (t-t_i)^j \partial_t^j c(t_i)
% \end{align*}

% The derivatives in the right side of \ref{sup-eq:action} exist since $f_i$ and $\phi$ are $k$-times differentiable \cite{constantine1996multivariate}. Now we turn to the two axioms of a group action, action of the identity element, and action of a composition.

First, the identity element ($\phi_{Id}$) in the diffeomorphism group should leave a sampling unchanged. Assume that $c$ is a curve that agrees with the sampling $T$:

\begin{align*}
    \phi_{Id} \cdot T &= \{(t_i, \bar x_i^{(k)}) : \bar x_i^{(k)} = ((\phi_{Id} \circ c)(t_i),\partial_t (\phi_{Id} \circ c)(t_i),...,\partial_t^k (\phi_{Id} \circ c)(t_i))\}_{i=1}^n \\
    &= \{(t_i, \bar x_i^{(k)}) : \bar x_i^{(k)} = (c(t_i),\partial_t c(t_i),...,\partial_t^k c(t_i))\}_{i=1}^n \\
    &= T
\end{align*}

Second, a composition of diffeomorphisms ($\phi_1 \circ \phi_2$) should act successively on a sampling:

\begin{align*}
    (\phi_1 \circ \phi_2) \cdot T &=  \{(t_i, \bar x_i^{(k)}) : \bar x_i^{(k)} = ... \\
    &((\phi_1 \circ \phi_2 \circ c)(t_i),\partial_t (\phi_1 \circ \phi_2 \circ c)(t_i),...,\partial_t^k (\phi_1 \circ \phi_2 \circ c)(t_i))\}_{i=1}^n \\
    &=  \{(t_i, \bar x_i^{(k)}) : \bar x_i^{(k)} = ... \\
    &((\phi_1 \circ f)(t_i),\partial_t (\phi_1 \circ f)(t_i),...,\partial_t^k (\phi_1 \circ f)(t_i))\}_{i=1}^n & f \triangleq \phi_2 \circ c \\
    &= \phi_1 \cdot \{(t_i, y_i^{(k)}) : y_i^{(k)} = (f(t_i),\partial_t f(t_i),...,\partial_t^k f(t_i)) \}_{i=1}^n \\
    &= \phi_1 \cdot \{(t_i, y_i^{(k)}) : y_i^{(k)} = ... \\
    &((\phi_2 \circ c)(t_i),\partial_t (\phi_2 \circ c)(t_i),...,\partial_t^k (\phi_2 \circ c)(t_i))\}_{i=1}^n \\
    &= \phi_1 \cdot \phi_2 \cdot \{(t_i, x_i^{(k)}) : x_i^{(k)} = (c(t_i),\partial_t c(t_i),...,\partial_t^k c(t_i)) \}_{i=1}^n \\
    &= \phi_1 \cdot \phi_2 \cdot \{(t_i, x_i^{(k)}) : x_i^{(k)} = (c(t_i),\partial_t c(t_i),...,\partial_t^k c(t_i)) \}_{i=1}^n \\
    &= \phi_1 \cdot \phi_2 \cdot T
\end{align*}
\end{proof}

\begin{proposition}
\textbf{[Zeroth Order Mapping Error Bound]} Say $\phi: \mathbb{R}^3 \rightarrow \mathbb{R}^3$ is a $C^1$ diffeomorphism and $c: [0,L] \rightarrow \mathbb{R}^3$ is a continuous, piecewise linear curve parameterized by arc length with knots $\{t_i: t_1=0, t_n=L, t_{i-1} < t_i\}_{i=1}^n$. For the transformed curve $f=\phi \circ c$, the zeroth order mapping defines a first order spline $g$ which satisfies:

\begin{align}
    \max_{t \in [0,L]} \vert f(t)-g(t)\vert &\leq \max_{i \in \{0,...,n\}, t \in [t_{i-1}, t_i]} \frac{1}{2} \left( \vert D\phi \circ c(t) - I \vert \vert t_i-t_{i-1} \vert + \vert \epsilon_i - \epsilon_{i-1}\vert \right) \label{sup-eq:error2}
\end{align}

\noindent
where $\epsilon_i\triangleq c(t_i)-\phi(c(t_i))$ and $D\phi \circ c(t)$ is the Jacobian of $\phi$ evaluated at $c(t)$.
\end{proposition}

\begin{proof}
We will focus on a single line segment $c_i=c\vert_{[t_{i-1},t_i]}$, then maximize over all such segments. $c_i$ is a function from $[t_{i-1},t_i]$ to $\mathbb{R}^3$. Denote the endpoints of $c_i$ as $c_{i,0}=c_i(t_{i-1})$ and $c_{i,1}=c_i(t_i)$. The zeroth order mapping of $c_i$ defines the first order spline $g_{c_i}(t)=\phi(c_{i,0}) + \frac{(t-t_{i-1})}{t_i-t_{i-1}} (\phi(c_{i,1})-\phi(c_{i,0}))$.

For simplicity, we will reparameterize the problem using $\sigma(t)=t_{i-1}+t(t_i-t_{i-1}):[0,1]\rightarrow [t_{i-1},t_i]$ and define $c'=c_i \circ \sigma$ which is defined on $[0,1]$. The zeroth order mapping of $c'$ defines the spline $g_{c'}(t)=\phi(c_{i,0}) + t (\phi(c_{i,1})-\phi(c_{i,0}))$. Note that the zeroth order mapping errors are the same in both parameterizations i.e. for $f_{c_i}=\phi \circ c_i, f_{c'}=\phi \circ c'$ we have:

\begin{align*}
    \max_{t \in [t_{i-1},t_i]} \vert f_{c_i}(t)-g_{c_i}(t)\vert &=\max_{t \in [0,1]} \vert f_{c'}(t)-g_{c'}(t)\vert
\end{align*}

since, for every $t \in [0,1]$, $\vert f_{c'}(t)-g_{c'}(t)\vert = \vert f_{c_i}(\sigma(t))-g_{c_i}(\sigma(t))\vert$ and for every $t \in [t_{i-1},t_i]$, $\vert f_{c_i}(t)-g_{c_i}(t)\vert = \vert f_{c'}(\sigma^{-1}(t))-g_{c'}(\sigma^{-1}(t))\vert$. So, we have converted the problem to bounding:

\begin{align*}
    \max_{t \in [0,1]} \vert f_{c'}(t)-g_{c'}(t)\vert
\end{align*}

We have

\begin{align*}
    f_{c'}(t)-g_{c'}(t) &= \phi(c'(t)) - \left[\phi(c_{i,0}) + t (\phi(c_{i,1})-\phi(c_{i,0})) \right]
\end{align*}

and since $f_{c'}(t)-g_{c'}(t)$ vanishes at both $t=0$ and $t=1$, the following argument, which uses the fundamental theorem of calculus, applies both going forward from $t=0$ and backward from $t=1$. So, without loss of generality, we consider $0 \leq t \leq \frac{1}{2}$:

\begin{align*}
    f_{c'}(t)-g_{c'}(t) &= \phi(c'(t)) - \left[\phi(c_{i,0}) + t (\phi(c_{i,1})-\phi(c_{i,0})) \right]\\
    &= \int_0^t \partial_{\tau} \left(\phi(c'(\tau)) - \left[\phi(c_{i,0}) + \tau (\phi(c_{i,1})-\phi(c_{i,0})) \right] \right) d\tau \\
    &= \int_0^t D\phi \circ c'(\tau)\cdot \dot c'(\tau) - (\phi(c_{i,1})-\phi(c_{i,0})) d \tau \\
    &\leq \max_{t \in [0,1]} \vert D\phi \circ c'(t) \cdot \dot c'(t) - (\phi(c_{i,1})-\phi(c_{i,0})) \vert \int_0^t d\tau \\
    &\leq \frac{1}{2} \max_{t \in [0,1]} \vert D\phi \circ c'(t) \cdot \dot c'(t) - (\phi(c_{i,1})-\phi(c_{i,0})) \vert &t\leq \frac{1}{2} \\
    &\text{Define: }\epsilon_i\triangleq c_{i,1} - \phi(c_{i,1}),\epsilon_{i-1}\triangleq c_{i,0} - \phi(c_{i,0}) \\
    &= \frac{1}{2} \max_{t \in [0,1]} \vert D\phi \circ c'(t) \cdot (c_{i,1}-c_{i,0}) - (c_{i,1}-c_{i,0}) + (\epsilon_i - \epsilon_{i-1}) \vert  \\
    &\leq \max_{t \in [0,1]} \frac{1}{2}\left( \vert D\phi \circ c'(t) - I \vert \vert c_{i,1}-c_{i,0} \vert  +\vert \epsilon_i - \epsilon_{i-1} \vert \right) \\
    &= \max_{t \in [t_{i-1},t_i]} \frac{1}{2}\left(\vert D\phi \circ c(t) - I \vert \vert c_{i,1}-c_{i,0} \vert  +\vert \epsilon_i - \epsilon_{i-1} \vert \right)\\
    &= \max_{t \in [t_{i-1},t_i]} \frac{1}{2}\left(\vert D\phi \circ c(t) - I \vert \vert t_i-t_{i-1} \vert  +\vert \epsilon_i - \epsilon_{i-1} \vert\right)
\end{align*}

where the last equality comes from the fact that $c$ is parametrized by arc length, so $\vert t_i-t_{i-1} \vert=\vert c_{i,1}-c_{i,0} \vert$. In summary we have:

\begin{align*}
    \max_{t \in [t_{i-1},t_i]} \vert f_{c_i}(t)-g_{c_i}(t)\vert &\leq \max_{t \in [t_{i-1},t_i]} \frac{1}{2}\left(\vert D\phi \circ c(t) - I \vert \vert t_i-t_{i-1} \vert  +\vert \epsilon_i - \epsilon_{i-1} \vert\right)
\end{align*}

Finally, we maximize over all segments to get:

\begin{align*}
    \max_{t \in [0,L]} \vert f(t)-g(t)\vert &\leq \max_{i \in \{0,...,n\}, t \in [t_{i-1},t_i]} \frac{1}{2}\left(\vert D\phi \circ c(t) - I \vert \vert t_i-t_{i-1} \vert  +\vert \epsilon_i - \epsilon_{i-1} \vert\right)
\end{align*}

where $f=\phi \circ c$ and $g$ is the first order spline defined by the zeroth order mapping of $c$. 

\end{proof}

\begin{proposition}
\textbf{[Comparable Bounds for Zeroth and First Order Mapping]} Say $\phi: \mathbb{R}^3 \rightarrow \mathbb{R}^3$ is a $C^4$ diffeomorphism and $c: [a,b] \rightarrow \mathbb{R}^3$ is a continuous, piecewise $C^4$ curve parameterized with knots $\{t_i: t_1=a, t_n=b, t_{i-1} < t_i\}_{i=1}^n$. For the transformed curve $f=\phi \circ c$ defined by coordinate functions $f=(f^0,f^1,f^2)^T$, the zeroth order mapping defines a first order spline $g_0$ which satisfies:

\begin{align}
    \max_{t \in [a,b]} \vert f(t)-g_0(t)\vert &\leq \frac{\sqrt{3}}{4} \max_{t \in [a,b], j \in \{0,1,2\}} \vert \partial^{(4)}_t f^j(t)\vert \left(\frac{\delta}{2}\right)^4 + \nonumber \\
    &\frac{\sqrt{3}}{2} \left(\frac{\delta}{2}\right)^2 \max_{i \in \{1...n\}, j \in \{0,1,2\}} \vert \partial^{(3)}_t f^j(t_i)\vert \left(\frac{\delta}{2}\right) + \nonumber \\
    & \frac{\sqrt{3}}{2} \left(\frac{\delta}{2}\right)^2 \max_{i \in \{1...n\}, j \in \{0,1,2\}} \vert \partial^{(2)}_t f^j(t_i)\vert \label{sup-eq:compare0}
\end{align}

\noindent 
where $\delta\triangleq \max_{2 \leq i\leq n} \vert t_i-t_{i-1}\vert$ and $\partial^{(k)}_t f^j(t)$ is the $k$'th derivative of $f^j$ evaluated at $t$. Also, the first order mapping defines a third order spline $g_1$, which satisfies

\begin{align}
    \max_{t \in [a,b]} \vert f(t)-g_1(t)\vert &\leq \frac{\sqrt{3}}{4!} \max_{t \in [a,b], j\in \{0,1,2\}} \vert \partial^{(4)}_t f^j(t)\vert \left(\frac{\delta}{2}\right)^4 \label{sup-eq:compare1}
\end{align}

\noindent

and we note that the bound in \ref{eq:compare1} is tighter than the bound in \ref{eq:compare0}.
Further, there exists a transformed curve $f$ and a set of knots $\{t_i\}_{i=1}^n$ that achieves both bounds exactly.

\end{proposition}
\begin{proof}
    We will prove both bounds starting with a single segment of $c$, then extending to the entire piecewise curve.
    
    The bound in \ref{sup-eq:compare1} comes from the error estimate of Hermite interpolation for one dimensional functions, Theorem 2 in Section 6.3 of \cite{kincaid2009numerical}. For a single segment between knots $t_{i-1},t_i$, this theorem states that if $p$ is the polynomial of degree at most $3$ which agrees with $h$ and $\partial_t h$ at the knots, then, for each $t$, there exists a point $\xi \in (t_{i-1},t_i)$ such that:

    \begin{align*}
        h(t)-p(t) &= \frac{\partial_t^{(4)} h(\xi)}{4!}(t-t_{i-1})^2(t-t_i)^2 \\
        &\text{therefore, for all }t \\
        \vert h(t)-p(t) \vert &\leq \frac{1}{4!} \max_{t \in [t_{i-1},t_i]} \vert \partial_t^{(4)}h(t) \vert \left( \frac{t_i-t_{i-1}}{2}\right)^4
    \end{align*}

    The first order mapping is indeed the third order spline that matches function and derivative values at the knots, so the above bound applies to all three coordinate functions of $f=(f^0,f^1,f^2)^T$ and $g_1=(g_1^0, g_1^1, g_1^2)^T$. To accommodate all three dimensions, we maximize over all dimensions and add a $\sqrt{3}$ term:

    \begin{align*}
        \max_{t \in [t_{i-1},t_i]} \vert f(t)-g_1(t) \vert &\leq \frac{\sqrt{3}}{4!} \max_{t \in [t_{i-1},t_i], j\in \{0,1,2\}} \vert \partial_t^{(4)}f^j(t) \vert \left( \frac{t_i-t_{i-1}}{2}\right)^4
    \end{align*}

    Then if we maximize both sides over all the segments, we get

    \begin{align*}
        \max_{t \in [a,b]} \vert f(t)-g_1(t) \vert &\leq \frac{\sqrt{3}}{4!} \max_{t \in [a,b], j\in \{0,1,2\}} \vert \partial_t^{(4)}f^j(t) \vert \left( \frac{\delta}{2}\right)^4
    \end{align*}

    The bound in \ref{sup-eq:compare0} comes from the error estimate of polynomial interpolation, Theorem 2 in Section 6.1 of \cite{kincaid2009numerical}. This theorem states that, for a single segment between knots $t_{i-1},t_i$, if $p$ is the line that agrees with a one dimensional function $h$ at the knots, then there exists a point $\xi \in (t_{i-1},t_i)$ such that:

    \begin{align}
        h(t)-p(t)&=\frac{\partial_t^{(2)}h(\xi)}{2} (t-t_{i-1})(t-t_i) \nonumber \\
        &\text{therefore} \nonumber \\
         \max_{t \in [t_{i-1,t_i}]}\vert h(t)-p(t) \vert &\leq \frac{1}{2} \max_{t \in [t_{i-1}, t_i]} \vert \partial_t^{(2)}h(t) \vert \left(\frac{t_i-t_{i-1}}{2}\right)^2 \label{sup-eq:polybound}
    \end{align}

    Our remaining task is to relate the maximum second derivative to the maximum fourth derivative. We start by using the fundamental theorem of calculus twice to get (for $t\in [t_{i-1},t_i]$):

    \begin{align}
        \partial_t^{(2)}h(t)&= \int_{t_{i-1}}^t \int_{t_{i-1}}^\tau h^{(4)}(\upsilon) d\upsilon d\tau + \partial_t^{(3)}h(t_{i-1})(t-t_{i-1}) + \partial_t^{(2)} h(t_{i-1}) \nonumber \\
        &\text{therefore} \nonumber \\
        \vert \partial_t^{(2)}h(t) \vert &\leq \max_{t \in [t_{i-1},t_i]} \vert \partial_t^{(4)}h(t) \vert \int_{t_{i-1}}^t \int_{t_{i-1}}^\tau  d\upsilon d\tau + \nonumber \\
        &\vert t-t_{i-1}\vert \max_{t \in \{t_{i-1}, t_i\}} \vert\partial_t^{(3)}h(t) \vert + \max_{t \in \{t_{i-1}, t_i\}} \vert \partial_t^{(2)} h(t) \vert \nonumber \\
        &= \frac{1}{2}(t-t_{i-1})^2 \max_{t \in [t_{i-1},t_i]} \vert \partial_t^{(4)}h(t) \vert  + \nonumber \\
        &\vert t-t_{i-1}\vert \max_{t \in \{t_{i-1}, t_i\}} \vert\partial_t^{(3)}h(t) \vert + \max_{t \in \{t_{i-1}, t_i\}} \vert \partial_t^{(2)} h(t) \vert \label{sup-eq:bound1}
    \end{align}
    Similarly,
    \begin{align}
        \partial_t^{(2)}h(t)&= \int_{t_{i}}^t \int_{t_{i}}^\tau h^{(4)}(\upsilon) d\upsilon d\tau + \partial_t^{(3)}h(t_{i})(t-t_{i}) + \partial_t^{(2)} h(t_{i}) \nonumber \\
        &\text{therefore} \nonumber \\
        \vert \partial_t^{(2)}h(t) \vert &\leq \max_{t \in [t_{i-1},t_i]} \vert \partial_t^{(4)}h(t) \vert \int_{t_{i}}^t \int_{t_{i}}^\tau  d\upsilon d\tau + \nonumber \\
        &\vert t-t_{i}\vert \max_{t \in \{t_{i-1}, t_i\}} \vert\partial_t^{(3)}h(t) \vert + \max_{t \in \{t_{i-1}, t_i\}} \vert \partial_t^{(2)} h(t) \vert \nonumber \\
        &= \frac{1}{2}(t-t_{i})^2 \max_{t \in [t_{i-1},t_i]} \vert \partial_t^{(4)}h(t) \vert  + \nonumber \\
        &\vert t-t_{i}\vert \max_{t \in \{t_{i-1}, t_i\}} \vert\partial_t^{(3)}h(t) \vert + \max_{t \in \{t_{i-1}, t_i\}} \vert \partial_t^{(2)} h(t) \vert \label{sup-eq:bound2}
    \end{align}

    So, we have two bounds for $\vert \partial_t^{(2)}h(t) \vert$, given by \ref{sup-eq:bound1} and \ref{sup-eq:bound2}. Both bounds are in the form of a second order polynomial of $t$, so it is straightforward to show that the combined bound is maximal at the midpoint between $t_{i-1}$ and $t_i$, where the bounds also happen to intersect. Thus, we have:

    \begin{align}
        \max_{t \in [t_{i-1},t_i]} \vert \partial_t^{(2)}h(t) \vert &\leq \frac{1}{2}\left(\frac{t_i-t_{i-1}}{2}\right)^2 \max_{t \in [t_{i-1},t_i]} \vert \partial_t^{(4)}h(t) \vert  + \nonumber \\
        &\left( \frac{t_i-t_{i-1}}{2}\right) \max_{t \in \{t_{i-1}, t_i\}} \vert\partial_t^{(3)}h(t) \vert + \max_{t \in \{t_{i-1}, t_i\}} \vert \partial_t^{(2)} h(t) \vert \label{sup-eq:2dbound}
    \end{align}
    
    Combining \ref{sup-eq:2dbound} with \ref{sup-eq:polybound}, we get

    \begin{align*}
        \max_{t \in [t_{i-1,t_i}]}\vert h(t)-p(t) \vert &\leq\frac{1}{4} \max_{t \in [t_{i-1}, t_i]} \vert \partial_t^{(4)}h(t) \vert \left(\frac{t_i-t_{i-1}}{2}\right)^4 + \\
        & \frac{1}{2} \left( \frac{t_i-t_{i-1}}{2}\right)^3 \max_{t \in \{t_{i-1}, t_i\}} \vert\partial_t^{(3)}h(t) \vert + \\
        &\frac{1}{2} \left( \frac{t_i-t_{i-1}}{2}\right)^2 \max_{t \in \{t_{i-1}, t_i\}} \vert \partial_t^{(2)} h(t) \vert 
    \end{align*}

    Again, to extend this result from a function that takes values in $\mathbb{R}$, to one that takes values in $\mathbb{R}^3$, we need to maximize over all dimensions and include a factor of $\sqrt{3}$. After maximizing over all segments, we get, for a piecewise $C^4$ curve $f=\phi \circ c:[a,b] \rightarrow \mathbb{R}^3$, and the zeroth order mapping $g_0$ which linearly interpolates the mapped knots:

    \begin{align*}
        \max_{t \in [a,b]}\vert f(t)-g_0(t) \vert &\leq \frac{\sqrt{3}}{4} \max_{t \in [a, b], j\in \{0,1,2\}} \vert \partial_t^{(4)}f^j(t) \vert \left(\frac{\delta}{2}\right)^4 + \\
        & \frac{\sqrt{3}}{2} \left( \frac{\delta}{2}\right)^2  \max_{i \in \{0...n\}, j\in \{0,1,2\}} \vert\partial_t^{(3)}f^j(t_i) \vert \left( \frac{\delta}{2}\right) + \\
        & \frac{\sqrt{3}}{2} \left( \frac{\delta}{2}\right)^2 \max_{i \in \{0...n\}, j\in \{0,1,2\}} \vert \partial_t^{(2)} f^j(t_i) \vert
    \end{align*}
    Finally, we note that if the mapped curve is $f(t)=(1-t^2, 1-t^2, 1-t^2)^T$ defined on $[-1,1]$ with two knots $t_0=-1,t_1=1$, then the zeroth order mapping is $g_0(t)=(0,0,0)$ and the first order mapping is $g_1(t)=(1-t^2, 1-t^2, 1-t^2)$. The first order mapping bound satisfies the bound from \ref{sup-eq:compare0} of 0, because $f=g_1$. The zeroth order bound is $\sqrt{3}$ which is achieved by the zeroth order mapping at $t=0$.
\end{proof}

% \subsection*{Proposition \ref{sup-prop} as a Tight Bound}

% In the case where $\phi(c_0)=c_0$ and $\phi(c_1)=c_1$, Proposition \ref{sup-prop} says:

% \begin{align*}
%     \max_{t \in [0,1]} \dv c_\phi(t)- \tilde c_\phi(t)\dv &\leq \frac{1}{2} \max_{t \in [0,1]} \dv D\phi(c(t)) - I \dv \dv c(1) - c(0) \dv
% \end{align*}

% So our bound above depends on the distance of the Jacobian of $\phi$ to the identity matrix $I$, which is the Jacobian of the identity diffeomorphism.  Say $c(0)=(1,-1,0)$ and $c(1)=(1,1,0)$ and

% \begin{align*}
%     \phi(x_1,x_2,x_3) = \begin{cases}
%     (x_1,x_2,x_3) & \vert x_2 \vert> 1\\
%     (x_1+x_2 \cdot s(x_2)-1, x_2, x_3) & o.w.
%     \end{cases}
% \end{align*}

% If $s(x)=sgn(x)$ then we have $c_\phi(0)=c(0)$ and $c_\phi(1)=c(1)$ and:

% \begin{align*}
%     \dv D\phi\vert_{c} - I \dv &= 1 \\
%     \dv c(1)-c(0)\dv &= 2
% \end{align*}

% So the right hand side of the bound is 1. However, halfway along this line segment, we have:

% \begin{align*}
%     c(0.5)&=\tilde c_\phi(0.5)=(1,0,0) \\
%     c_\phi(0.5)&=\phi((1,0,0)) =(0,0,0) \\
%     \dv c_\phi(0.5)- \tilde c_\phi(0.5)\dv &= 1
% \end{align*}

% so we have achieved equality. Of course, $s(x)=sgn(x)$ is not differentiable, so the associated $\phi$ is not a diffeomorphism, but we can construct a sequence of smooth functions $s_n(x)$ that approach $sgn(x)$ pointwise, and also approach equality in this bound in the limit, proving the bound is tight.

\end{document}

% --- supplement: main-supplement.tex ---

\title[Supplement: Preserving...Information while Transforming Neuronal Curves]{Supplement: Preserving Higher Order Information while Transforming Neuronal Curves}
    
    \author*[1,2]{\fnm{Thomas L.} \sur{Athey}}\email{tathey1@jhu.edu}
    
    \author[3,4]{\fnm{Daniel J.} \sur{Tward}}\email{DTward@mednet.ucla.edu}
    
    \author[5]{\fnm{Ulrich} \sur{Mueller}}\email{umuelle3@jhmi.edu}

    \author[2,6,7,8]{\fnm{Laurent} \sur{Younes}}\email{laurent.younes@jhu.edu}
    \author[1,2,7,8]{\fnm{Michael I.} \sur{Miller}}\email{mim@cis.jhu.edu}

    \affil*[1]{\orgdiv{Department of Biomedical Engineering}, \orgname{Johns Hopkins University}, \orgaddress{\city{Baltimore}, \state{MD}, \country{USA}}}

    \affil[2]{\orgdiv{Institute of Computational Medicine}, \orgname{Johns Hopkins University}, \orgaddress{\city{Baltimore}, \state{MD}, \country{USA}}}
    
    \affil[3]{\orgdiv{Department of Computational Medicine}, \orgname{University of California at Los Angeles}, \orgaddress{ \city{Los Angeles}, \state{CA}, \country{USA}}}
    
    \affil[4]{\orgdiv{Department of Neurology}, \orgname{University of California at Los Angeles}, \orgaddress{ \city{Los Angeles}, \state{CA}, \country{USA}}}
    
    \affil[5]{\orgdiv{Department of Neuroscience}, \orgname{Johns Hopkins University}, \orgaddress{\city{Baltimore}, \state{MD}, \country{USA}}}
    
    \affil[6]{\orgdiv{Department of Applied Mathematics \& Statistics}, \orgname{Johns Hopkins University}, \orgaddress{\city{Baltimore}, \state{MD}, \country{USA}}}
    
    \affil[7]{\orgdiv{Center for Imaging Science}, \orgname{Johns Hopkins University}, \orgaddress{\city{Baltimore}, \state{MD}, \country{USA}}}
    
    \affil[8]{\orgdiv{Kavli Neuroscience Discovery Institute}, \orgname{Johns Hopkins University}, \orgaddress{\city{Baltimore}, \state{MD}, \country{USA}}}

\maketitle

\section*{Mathematical Proofs}

We use $\vert \cdot \vert$ to denote the Euclidean norm for elements of $\mathbb{R}^d$, and the spectral norm for matrices.

\begin{statement}
For a sequence of time-stamped elements on the jet space, 
$T = \{(t_i, x_i^{(k)})\}_{i=1}^n$ in $(J^k)^n$, we define the action of diffeomorphisms

\begin{align}
    \phi \cdot T = \{(t_i, \phi\cdot x_i^{(k)})\}_{i=1}^n \label{sup-eq:action}
\end{align}
\end{statement}
\begin{proof}
For a regular differentiable curve $c:[0,L]\rightarrow \mathbb{R}^3$ with extension $\hat{c}:[0,L]\rightarrow X^{(k)}$, we defined $\phi \cdot \hat{c}$ as the extension, $\widehat{\phi \circ c}$. In practice, only the finite sampling $\{x^{(k)}_i\}_{i=1}^n$ is accessible. However, it is always possible to define a curve $c$ that agrees with this sampling, such as with a polynomial \cite{olver1995equivalence}. Then, we can apply $\phi$ to this curve, and compute the transformed positions and derivatives, defining the action on the finite sampling. Now, we verify the axioms of a group action. 
% First, we need to show that the functions $f_i$ exist. A natural choice of $f_i$ is the $k$-th order polynomial which is infinitely differentiable:

% \begin{align*}
%     f_i(t) = c(t_i) + \sum_{j=1}^k \frac{1}{j!} (t-t_i)^j \partial_t^j c(t_i)
% \end{align*}

% The derivatives in the right side of \ref{sup-eq:action} exist since $f_i$ and $\phi$ are $k$-times differentiable \cite{constantine1996multivariate}. Now we turn to the two axioms of a group action, action of the identity element, and action of a composition.

First, the identity element ($\phi_{Id}$) in the diffeomorphism group should leave a sampling unchanged. Assume that $c$ is a curve that agrees with the sampling $T$:

\begin{align*}
    \phi_{Id} \cdot T &= \{(t_i, \bar x_i^{(k)}) : \bar x_i^{(k)} = ((\phi_{Id} \circ c)(t_i),\partial_t (\phi_{Id} \circ c)(t_i),...,\partial_t^k (\phi_{Id} \circ c)(t_i))\}_{i=1}^n \\
    &= \{(t_i, \bar x_i^{(k)}) : \bar x_i^{(k)} = (c(t_i),\partial_t c(t_i),...,\partial_t^k c(t_i))\}_{i=1}^n \\
    &= T
\end{align*}

Second, a composition of diffeomorphisms ($\phi_1 \circ \phi_2$) should act successively on a sampling:

\begin{align*}
    (\phi_1 \circ \phi_2) \cdot T &=  \{(t_i, \bar x_i^{(k)}) : \bar x_i^{(k)} = ... \\
    &((\phi_1 \circ \phi_2 \circ c)(t_i),\partial_t (\phi_1 \circ \phi_2 \circ c)(t_i),...,\partial_t^k (\phi_1 \circ \phi_2 \circ c)(t_i))\}_{i=1}^n \\
    &=  \{(t_i, \bar x_i^{(k)}) : \bar x_i^{(k)} = ... \\
    &((\phi_1 \circ f)(t_i),\partial_t (\phi_1 \circ f)(t_i),...,\partial_t^k (\phi_1 \circ f)(t_i))\}_{i=1}^n & f \triangleq \phi_2 \circ c \\
    &= \phi_1 \cdot \{(t_i, y_i^{(k)}) : y_i^{(k)} = (f(t_i),\partial_t f(t_i),...,\partial_t^k f(t_i)) \}_{i=1}^n \\
    &= \phi_1 \cdot \{(t_i, y_i^{(k)}) : y_i^{(k)} = ... \\
    &((\phi_2 \circ c)(t_i),\partial_t (\phi_2 \circ c)(t_i),...,\partial_t^k (\phi_2 \circ c)(t_i))\}_{i=1}^n \\
    &= \phi_1 \cdot \phi_2 \cdot \{(t_i, x_i^{(k)}) : x_i^{(k)} = (c(t_i),\partial_t c(t_i),...,\partial_t^k c(t_i)) \}_{i=1}^n \\
    &= \phi_1 \cdot \phi_2 \cdot \{(t_i, x_i^{(k)}) : x_i^{(k)} = (c(t_i),\partial_t c(t_i),...,\partial_t^k c(t_i)) \}_{i=1}^n \\
    &= \phi_1 \cdot \phi_2 \cdot T
\end{align*}
\end{proof}

% \begin{proposition}
% \textbf{[Zeroth Order Mapping Error Bound 1]} Say $\phi: \mathbb{R}^3 \rightarrow \mathbb{R}^3$ is a $C^1$ diffeomorphism and $c: [0,L] \rightarrow \mathbb{R}^3$ is a continuous, piecewise $C^1$ curve parameterized by arc length with knots $\{t_i: t_1=0, t_n=L, t_{i-1} < t_i\}_{i=1}^n$. For the transformed curve $f=\phi \circ c$, the zeroth order mapping defines a first order spline $g$ which satisfies:

% \begin{align}
%     \max_{t \in [0,L]} \vert f(t)-g(t)\vert &\leq \delta \sqrt{3} \max_{t \in [0,L]} \vert D\phi \circ c(t) \vert \label{sup-eq:error}
% \end{align}

% \noindent
% where $\delta \triangleq \max_{2 \leq i\leq n} \vert t_i-t_{i-1}\vert$, and $D\phi \circ c(t)$ is the Jacobian of $\phi$ evaluated at $c(t)$.
% \end{proposition}
% \begin{proof}
% We will define the modulus of continuity for a function $h:[t_0,t_n] \rightarrow \mathbb{R}$ as in \cite{kincaid2009numerical}:

% \begin{align*}
%     \omega(h;\delta) &= \max_{\vert s-t\vert \leq \delta} \vert h(s)-h(t)\vert
% \end{align*}

% and note that if $\dot h$ exists, then $\vert \dot h \vert \leq M$ implies $\omega(h;\delta) \leq M\delta$ by the mean value theorem.

% The theorem on spline function approximation in \cite{kincaid2009numerical} states that the first order spline $l$ that interpolates the function $h:[t_0,t_n] \rightarrow \mathbb{R}$ satisfies:

% \begin{align*}
%     \max_{t \in [t_0,t_n]} \vert l(t)-h(t)\vert &\leq \omega(h,\delta)
% \end{align*}

% \noindent where $\delta=\max_{0\leq i \leq n} \vert t_i-t_{i-1}\vert$.

% Now consider the curve $f=\phi \circ c$. First we will consider the curve restricted to the interval $(t_i,t_{i+1})$, denoted $f_i$. The derivative of $f_i$ exists everywhere and is equal to $(D\phi \circ c(t)) \cdot \dot c(t)$, which is bounded by $\max_{t} \vert D\phi \circ c(t) \vert$ since $\vert \dot c(t) \vert=1$ ($c$ is parameterized by arc length).

% If we split the 3D curve $f_i$ into its coordinate functions $f(t)=(f_{i,1}(t), f_{i,2}(t), f_{i,3}(t))^T$, we note $\vert f_{i,k}(t) \vert \leq \vert f_i(t) \vert$ and therefore that $\omega(f_{i,k}, \delta) \leq \delta \max_{t} \vert D\phi \circ c(t) \vert$ for $k=1,2,3$. Then for each $f_{i,k}$, we can construct with a first order spline $g_{i,k}$ from the knots $\{f_{k}(t_i), f_{k}(t_{i+1})\}$. The composite function $g_i(t)=(g_{i,1}(t), g_{i,2}(t), g_{i,3}(t))^T$ satisfies, for all $t \in (t_i,t_{i+1})$:

% \begin{align*}
%     \vert f_i(t) - g_i(t) \vert &= \sqrt{\vert f_{i,1}(t) - g_{i,1}(t) \vert^2 + \vert f_{i,2}(t) - g_{i,2}(t)  \vert^2 + \vert f_{i,3}(t) - g_{i,3}(t)  \vert^2} \\
%     &\leq \sqrt{ \omega^2(f_{i,1},\delta) + \omega^2(f_{i,2},\delta) + \omega^2(f_{i,3},\delta) } \\
%     &\leq \delta \sqrt{3} \max_{t} \vert D\phi \circ c(t) \vert
% \end{align*}

% with $\delta=\vert t_{i+1}-t_i \vert$ since we are looking at a single segment. We can construct the full zeroth order mapping, $g$, by performing the same procedure over all segments indexed by $i$. We then maximize over all segments to get:

% \begin{align*}
%     \max_{t \in [0,L]} \vert f(t)-g(t) \vert \leq \delta \sqrt{3} \max_{t \in [0,L]} \vert D\phi \circ c(t) \vert
% \end{align*}
% \end{proof}

\begin{proposition}
\textbf{[Zeroth Order Mapping Error Bound]} Say $\phi: \mathbb{R}^3 \rightarrow \mathbb{R}^3$ is a $C^1$ diffeomorphism and $c: [0,L] \rightarrow \mathbb{R}^3$ is a continuous, piecewise linear curve parameterized by arc length with knots $\{t_i: t_1=0, t_n=L, t_{i-1} < t_i\}_{i=1}^n$. For the transformed curve $f=\phi \circ c$, the zeroth order mapping defines a first order spline $g$ which satisfies:

\begin{align}
    \max_{t \in [0,L]} \vert f(t)-g(t)\vert &\leq \max_{i \in \{0,...,n\}, t \in [t_{i-1}, t_i]} \frac{1}{2} \left( \vert D\phi \circ c(t) - I \vert \vert t_i-t_{i-1} \vert + \vert \epsilon_i - \epsilon_{i-1}\vert \right) \label{sup-eq:error2}
\end{align}

\noindent
where $\epsilon_i\triangleq c(t_i)-\phi(c(t_i))$ and $D\phi \circ c(t)$ is the Jacobian of $\phi$ evaluated at $c(t)$.
\end{proposition}

\begin{proof}
We will focus on a single line segment $c_i=c\vert_{[t_{i-1},t_i]}$, then maximize over all such segments. $c_i$ is a function from $[t_{i-1},t_i]$ to $\mathbb{R}^3$. Denote the endpoints of $c_i$ as $c_{i,0}=c_i(t_{i-1})$ and $c_{i,1}=c_i(t_i)$. The zeroth order mapping of $c_i$ defines the first order spline $g_{c_i}(t)=\phi(c_{i,0}) + \frac{(t-t_{i-1})}{t_i-t_{i-1}} (\phi(c_{i,1})-\phi(c_{i,0}))$.

For simplicity, we will reparameterize the problem using $\sigma(t)=t_{i-1}+t(t_i-t_{i-1}):[0,1]\rightarrow [t_{i-1},t_i]$ and define $c'=c_i \circ \sigma$ which is defined on $[0,1]$. The zeroth order mapping of $c'$ defines the spline $g_{c'}(t)=\phi(c_{i,0}) + t (\phi(c_{i,1})-\phi(c_{i,0}))$. Note that the zeroth order mapping errors are the same in both parameterizations i.e. for $f_{c_i}=\phi \circ c_i, f_{c'}=\phi \circ c'$ we have:

\begin{align*}
    \max_{t \in [t_{i-1},t_i]} \vert f_{c_i}(t)-g_{c_i}(t)\vert &=\max_{t \in [0,1]} \vert f_{c'}(t)-g_{c'}(t)\vert
\end{align*}

since, for every $t \in [0,1]$, $\vert f_{c'}(t)-g_{c'}(t)\vert = \vert f_{c_i}(\sigma(t))-g_{c_i}(\sigma(t))\vert$ and for every $t \in [t_{i-1},t_i]$, $\vert f_{c_i}(t)-g_{c_i}(t)\vert = \vert f_{c'}(\sigma^{-1}(t))-g_{c'}(\sigma^{-1}(t))\vert$. So, we have converted the problem to bounding:

\begin{align*}
    \max_{t \in [0,1]} \vert f_{c'}(t)-g_{c'}(t)\vert
\end{align*}

We have

\begin{align*}
    f_{c'}(t)-g_{c'}(t) &= \phi(c'(t)) - \left[\phi(c_{i,0}) + t (\phi(c_{i,1})-\phi(c_{i,0})) \right]
\end{align*}

and since $f_{c'}(t)-g_{c'}(t)$ vanishes at both $t=0$ and $t=1$, the following argument, which uses the fundamental theorem of calculus, applies both going forward from $t=0$ and backward from $t=1$. So, without loss of generality, we consider $0 \leq t \leq \frac{1}{2}$:

\begin{align*}
    f_{c'}(t)-g_{c'}(t) &= \phi(c'(t)) - \left[\phi(c_{i,0}) + t (\phi(c_{i,1})-\phi(c_{i,0})) \right]\\
    &= \int_0^t \partial_{\tau} \left(\phi(c'(\tau)) - \left[\phi(c_{i,0}) + \tau (\phi(c_{i,1})-\phi(c_{i,0})) \right] \right) d\tau \\
    &= \int_0^t D\phi \circ c'(\tau)\cdot \dot c'(\tau) - (\phi(c_{i,1})-\phi(c_{i,0})) d \tau \\
    &\leq \max_{t \in [0,1]} \vert D\phi \circ c'(t) \cdot \dot c'(t) - (\phi(c_{i,1})-\phi(c_{i,0})) \vert \int_0^t d\tau \\
    &\leq \frac{1}{2} \max_{t \in [0,1]} \vert D\phi \circ c'(t) \cdot \dot c'(t) - (\phi(c_{i,1})-\phi(c_{i,0})) \vert &t\leq \frac{1}{2} \\
    &\text{Define: }\epsilon_i\triangleq c_{i,1} - \phi(c_{i,1}),\epsilon_{i-1}\triangleq c_{i,0} - \phi(c_{i,0}) \\
    &= \frac{1}{2} \max_{t \in [0,1]} \vert D\phi \circ c'(t) \cdot (c_{i,1}-c_{i,0}) - (c_{i,1}-c_{i,0}) + (\epsilon_i - \epsilon_{i-1}) \vert  \\
    &\leq \max_{t \in [0,1]} \frac{1}{2}\left( \vert D\phi \circ c'(t) - I \vert \vert c_{i,1}-c_{i,0} \vert  +\vert \epsilon_i - \epsilon_{i-1} \vert \right) \\
    &= \max_{t \in [t_{i-1},t_i]} \frac{1}{2}\left(\vert D\phi \circ c(t) - I \vert \vert c_{i,1}-c_{i,0} \vert  +\vert \epsilon_i - \epsilon_{i-1} \vert \right)\\
    &= \max_{t \in [t_{i-1},t_i]} \frac{1}{2}\left(\vert D\phi \circ c(t) - I \vert \vert t_i-t_{i-1} \vert  +\vert \epsilon_i - \epsilon_{i-1} \vert\right)
\end{align*}

where the last equality comes from the fact that $c$ is parametrized by arc length, so $\vert t_i-t_{i-1} \vert=\vert c_{i,1}-c_{i,0} \vert$. In summary we have:

\begin{align*}
    \max_{t \in [t_{i-1},t_i]} \vert f_{c_i}(t)-g_{c_i}(t)\vert &\leq \max_{t \in [t_{i-1},t_i]} \frac{1}{2}\left(\vert D\phi \circ c(t) - I \vert \vert t_i-t_{i-1} \vert  +\vert \epsilon_i - \epsilon_{i-1} \vert\right)
\end{align*}

Finally, we maximize over all segments to get:

\begin{align*}
    \max_{t \in [0,L]} \vert f(t)-g(t)\vert &\leq \max_{i \in \{0,...,n\}, t \in [t_{i-1},t_i]} \frac{1}{2}\left(\vert D\phi \circ c(t) - I \vert \vert t_i-t_{i-1} \vert  +\vert \epsilon_i - \epsilon_{i-1} \vert\right)
\end{align*}

where $f=\phi \circ c$ and $g$ is the first order spline defined by the zeroth order mapping of $c$. 

\end{proof}

\begin{proposition}
\textbf{[Comparable Bounds for Zeroth and First Order Mapping]} Say $\phi: \mathbb{R}^3 \rightarrow \mathbb{R}^3$ is a $C^4$ diffeomorphism and $c: [a,b] \rightarrow \mathbb{R}^3$ is a continuous, piecewise $C^4$ curve parameterized with knots $\{t_i: t_1=a, t_n=b, t_{i-1} < t_i\}_{i=1}^n$. For the transformed curve $f=\phi \circ c$ defined by coordinate functions $f=(f^0,f^1,f^2)^T$, the zeroth order mapping defines a first order spline $g_0$ which satisfies:

\begin{align}
    \max_{t \in [a,b]} \vert f(t)-g_0(t)\vert &\leq \frac{\sqrt{3}}{4} \max_{t \in [a,b], j \in \{0,1,2\}} \vert \partial^{(4)}_t f^j(t)\vert \left(\frac{\delta}{2}\right)^4 + \nonumber \\
    &\frac{\sqrt{3}}{2} \left(\frac{\delta}{2}\right)^2 \max_{i \in \{1...n\}, j \in \{0,1,2\}} \vert \partial^{(3)}_t f^j(t_i)\vert \left(\frac{\delta}{2}\right) + \nonumber \\
    & \frac{\sqrt{3}}{2} \left(\frac{\delta}{2}\right)^2 \max_{i \in \{1...n\}, j \in \{0,1,2\}} \vert \partial^{(2)}_t f^j(t_i)\vert \label{sup-eq:compare0}
\end{align}

\noindent 
where $\delta\triangleq \max_{2 \leq i\leq n} \vert t_i-t_{i-1}\vert$ and $\partial^{(k)}_t f^j(t)$ is the $k$'th derivative of $f^j$ evaluated at $t$. Also, the first order mapping defines a third order spline $g_1$, which satisfies

\begin{align}
    \max_{t \in [a,b]} \vert f(t)-g_1(t)\vert &\leq \frac{\sqrt{3}}{4!} \max_{t \in [a,b], j\in \{0,1,2\}} \vert \partial^{(4)}_t f^j(t)\vert \left(\frac{\delta}{2}\right)^4 \label{sup-eq:compare1}
\end{align}

\noindent

and we note that the bound in \ref{eq:compare1} is tighter than the bound in \ref{eq:compare0}.
Further, there exists a transformed curve $f$ and a set of knots $\{t_i\}_{i=1}^n$ that achieves both bounds exactly.

\end{proposition}
\begin{proof}
    We will prove both bounds starting with a single segment of $c$, then extending to the entire piecewise curve.
    
    The bound in \ref{sup-eq:compare1} comes from the error estimate of Hermite interpolation for one dimensional functions, Theorem 2 in Section 6.3 of \cite{kincaid2009numerical}. For a single segment between knots $t_{i-1},t_i$, this theorem states that if $p$ is the polynomial of degree at most $3$ which agrees with $h$ and $\partial_t h$ at the knots, then, for each $t$, there exists a point $\xi \in (t_{i-1},t_i)$ such that:

    \begin{align*}
        h(t)-p(t) &= \frac{\partial_t^{(4)} h(\xi)}{4!}(t-t_{i-1})^2(t-t_i)^2 \\
        &\text{therefore, for all }t \\
        \vert h(t)-p(t) \vert &\leq \frac{1}{4!} \max_{t \in [t_{i-1},t_i]} \vert \partial_t^{(4)}h(t) \vert \left( \frac{t_i-t_{i-1}}{2}\right)^4
    \end{align*}

    The first order mapping is indeed the third order spline that matches function and derivative values at the knots, so the above bound applies to all three coordinate functions of $f=(f^0,f^1,f^2)^T$ and $g_1=(g_1^0, g_1^1, g_1^2)^T$. To accommodate all three dimensions, we maximize over all dimensions and add a $\sqrt{3}$ term:

    \begin{align*}
        \max_{t \in [t_{i-1},t_i]} \vert f(t)-g_1(t) \vert &\leq \frac{\sqrt{3}}{4!} \max_{t \in [t_{i-1},t_i], j\in \{0,1,2\}} \vert \partial_t^{(4)}f^j(t) \vert \left( \frac{t_i-t_{i-1}}{2}\right)^4
    \end{align*}

    Then if we maximize both sides over all the segments, we get

    \begin{align*}
        \max_{t \in [a,b]} \vert f(t)-g_1(t) \vert &\leq \frac{\sqrt{3}}{4!} \max_{t \in [a,b], j\in \{0,1,2\}} \vert \partial_t^{(4)}f^j(t) \vert \left( \frac{\delta}{2}\right)^4
    \end{align*}

    The bound in \ref{sup-eq:compare0} comes from the error estimate of polynomial interpolation, Theorem 2 in Section 6.1 of \cite{kincaid2009numerical}. This theorem states that, for a single segment between knots $t_{i-1},t_i$, if $p$ is the line that agrees with a one dimensional function $h$ at the knots, then there exists a point $\xi \in (t_{i-1},t_i)$ such that:

    \begin{align}
        h(t)-p(t)&=\frac{\partial_t^{(2)}h(\xi)}{2} (t-t_{i-1})(t-t_i) \nonumber \\
        &\text{therefore} \nonumber \\
         \max_{t \in [t_{i-1,t_i}]}\vert h(t)-p(t) \vert &\leq \frac{1}{2} \max_{t \in [t_{i-1}, t_i]} \vert \partial_t^{(2)}h(t) \vert \left(\frac{t_i-t_{i-1}}{2}\right)^2 \label{sup-eq:polybound}
    \end{align}

    Our remaining task is to relate the maximum second derivative to the maximum fourth derivative. We start by using the fundamental theorem of calculus twice to get (for $t\in [t_{i-1},t_i]$):

    \begin{align}
        \partial_t^{(2)}h(t)&= \int_{t_{i-1}}^t \int_{t_{i-1}}^\tau h^{(4)}(\upsilon) d\upsilon d\tau + \partial_t^{(3)}h(t_{i-1})(t-t_{i-1}) + \partial_t^{(2)} h(t_{i-1}) \nonumber \\
        &\text{therefore} \nonumber \\
        \vert \partial_t^{(2)}h(t) \vert &\leq \max_{t \in [t_{i-1},t_i]} \vert \partial_t^{(4)}h(t) \vert \int_{t_{i-1}}^t \int_{t_{i-1}}^\tau  d\upsilon d\tau + \nonumber \\
        &\vert t-t_{i-1}\vert \max_{t \in \{t_{i-1}, t_i\}} \vert\partial_t^{(3)}h(t) \vert + \max_{t \in \{t_{i-1}, t_i\}} \vert \partial_t^{(2)} h(t) \vert \nonumber \\
        &= \frac{1}{2}(t-t_{i-1})^2 \max_{t \in [t_{i-1},t_i]} \vert \partial_t^{(4)}h(t) \vert  + \nonumber \\
        &\vert t-t_{i-1}\vert \max_{t \in \{t_{i-1}, t_i\}} \vert\partial_t^{(3)}h(t) \vert + \max_{t \in \{t_{i-1}, t_i\}} \vert \partial_t^{(2)} h(t) \vert \label{sup-eq:bound1}
    \end{align}
    Similarly,
    \begin{align}
        \partial_t^{(2)}h(t)&= \int_{t_{i}}^t \int_{t_{i}}^\tau h^{(4)}(\upsilon) d\upsilon d\tau + \partial_t^{(3)}h(t_{i})(t-t_{i}) + \partial_t^{(2)} h(t_{i}) \nonumber \\
        &\text{therefore} \nonumber \\
        \vert \partial_t^{(2)}h(t) \vert &\leq \max_{t \in [t_{i-1},t_i]} \vert \partial_t^{(4)}h(t) \vert \int_{t_{i}}^t \int_{t_{i}}^\tau  d\upsilon d\tau + \nonumber \\
        &\vert t-t_{i}\vert \max_{t \in \{t_{i-1}, t_i\}} \vert\partial_t^{(3)}h(t) \vert + \max_{t \in \{t_{i-1}, t_i\}} \vert \partial_t^{(2)} h(t) \vert \nonumber \\
        &= \frac{1}{2}(t-t_{i})^2 \max_{t \in [t_{i-1},t_i]} \vert \partial_t^{(4)}h(t) \vert  + \nonumber \\
        &\vert t-t_{i}\vert \max_{t \in \{t_{i-1}, t_i\}} \vert\partial_t^{(3)}h(t) \vert + \max_{t \in \{t_{i-1}, t_i\}} \vert \partial_t^{(2)} h(t) \vert \label{sup-eq:bound2}
    \end{align}

    So, we have two bounds for $\vert \partial_t^{(2)}h(t) \vert$, given by \ref{sup-eq:bound1} and \ref{sup-eq:bound2}. Both bounds are in the form of a second order polynomial of $t$, so it is straightforward to show that the combined bound is maximal at the midpoint between $t_{i-1}$ and $t_i$, where the bounds also happen to intersect. Thus, we have:

    \begin{align}
        \max_{t \in [t_{i-1},t_i]} \vert \partial_t^{(2)}h(t) \vert &\leq \frac{1}{2}\left(\frac{t_i-t_{i-1}}{2}\right)^2 \max_{t \in [t_{i-1},t_i]} \vert \partial_t^{(4)}h(t) \vert  + \nonumber \\
        &\left( \frac{t_i-t_{i-1}}{2}\right) \max_{t \in \{t_{i-1}, t_i\}} \vert\partial_t^{(3)}h(t) \vert + \max_{t \in \{t_{i-1}, t_i\}} \vert \partial_t^{(2)} h(t) \vert \label{sup-eq:2dbound}
    \end{align}
    
    Combining \ref{sup-eq:2dbound} with \ref{sup-eq:polybound}, we get

    \begin{align*}
        \max_{t \in [t_{i-1,t_i}]}\vert h(t)-p(t) \vert &\leq\frac{1}{4} \max_{t \in [t_{i-1}, t_i]} \vert \partial_t^{(4)}h(t) \vert \left(\frac{t_i-t_{i-1}}{2}\right)^4 + \\
        & \frac{1}{2} \left( \frac{t_i-t_{i-1}}{2}\right)^3 \max_{t \in \{t_{i-1}, t_i\}} \vert\partial_t^{(3)}h(t) \vert + \\
        &\frac{1}{2} \left( \frac{t_i-t_{i-1}}{2}\right)^2 \max_{t \in \{t_{i-1}, t_i\}} \vert \partial_t^{(2)} h(t) \vert 
    \end{align*}

    Again, to extend this result from a function that takes values in $\mathbb{R}$, to one that takes values in $\mathbb{R}^3$, we need to maximize over all dimensions and include a factor of $\sqrt{3}$. After maximizing over all segments, we get, for a piecewise $C^4$ curve $f=\phi \circ c:[a,b] \rightarrow \mathbb{R}^3$, and the zeroth order mapping $g_0$ which linearly interpolates the mapped knots:

    \begin{align*}
        \max_{t \in [a,b]}\vert f(t)-g_0(t) \vert &\leq \frac{\sqrt{3}}{4} \max_{t \in [a, b], j\in \{0,1,2\}} \vert \partial_t^{(4)}f^j(t) \vert \left(\frac{\delta}{2}\right)^4 + \\
        & \frac{\sqrt{3}}{2} \left( \frac{\delta}{2}\right)^2  \max_{i \in \{0...n\}, j\in \{0,1,2\}} \vert\partial_t^{(3)}f^j(t_i) \vert \left( \frac{\delta}{2}\right) + \\
        & \frac{\sqrt{3}}{2} \left( \frac{\delta}{2}\right)^2 \max_{i \in \{0...n\}, j\in \{0,1,2\}} \vert \partial_t^{(2)} f^j(t_i) \vert
    \end{align*}
    Finally, we note that if the mapped curve is $f(t)=(1-t^2, 1-t^2, 1-t^2)^T$ defined on $[-1,1]$ with two knots $t_0=-1,t_1=1$, then the zeroth order mapping is $g_0(t)=(0,0,0)$ and the first order mapping is $g_1(t)=(1-t^2, 1-t^2, 1-t^2)$. The first order mapping bound satisfies the bound from \ref{sup-eq:compare0} of 0, because $f=g_1$. The zeroth order bound is $\sqrt{3}$ which is achieved by the zeroth order mapping at $t=0$.
\end{proof}

% \subsection*{Proposition \ref{sup-prop} as a Tight Bound}

% In the case where $\phi(c_0)=c_0$ and $\phi(c_1)=c_1$, Proposition \ref{sup-prop} says:

% \begin{align*}
%     \max_{t \in [0,1]} \dv c_\phi(t)- \tilde c_\phi(t)\dv &\leq \frac{1}{2} \max_{t \in [0,1]} \dv D\phi(c(t)) - I \dv \dv c(1) - c(0) \dv
% \end{align*}

% So our bound above depends on the distance of the Jacobian of $\phi$ to the identity matrix $I$, which is the Jacobian of the identity diffeomorphism.  Say $c(0)=(1,-1,0)$ and $c(1)=(1,1,0)$ and

% \begin{align*}
%     \phi(x_1,x_2,x_3) = \begin{cases}
%     (x_1,x_2,x_3) & \vert x_2 \vert> 1\\
%     (x_1+x_2 \cdot s(x_2)-1, x_2, x_3) & o.w.
%     \end{cases}
% \end{align*}

% If $s(x)=sgn(x)$ then we have $c_\phi(0)=c(0)$ and $c_\phi(1)=c(1)$ and:

% \begin{align*}
%     \dv D\phi\vert_{c} - I \dv &= 1 \\
%     \dv c(1)-c(0)\dv &= 2
% \end{align*}

% So the right hand side of the bound is 1. However, halfway along this line segment, we have:

% \begin{align*}
%     c(0.5)&=\tilde c_\phi(0.5)=(1,0,0) \\
%     c_\phi(0.5)&=\phi((1,0,0)) =(0,0,0) \\
%     \dv c_\phi(0.5)- \tilde c_\phi(0.5)\dv &= 1
% \end{align*}

% so we have achieved equality. Of course, $s(x)=sgn(x)$ is not differentiable, so the associated $\phi$ is not a diffeomorphism, but we can construct a sequence of smooth functions $s_n(x)$ that approach $sgn(x)$ pointwise, and also approach equality in this bound in the limit, proving the bound is tight.

\bibliography{refs}